\documentclass[preprint,12pt]{elsarticle}

\usepackage{amsmath,amssymb,amsfonts}
\usepackage{amsthm}

\usepackage{graphicx}
\usepackage{booktabs}
\usepackage{float}
\usepackage[section]{placeins} 

\usepackage{xcolor}
\usepackage[protrusion=true,expansion=false]{microtype}
\usepackage{tikz}
\usetikzlibrary{arrows,shapes,patterns,positioning,calc,arrows.meta}

\usepackage{hyperref}
\hypersetup{colorlinks=true,citecolor=blue,urlcolor=blue,linkcolor=blue}

\theoremstyle{plain}
\newtheorem{theorem}{Theorem}[section]
\newtheorem{corollary}[theorem]{Corollary}
\newtheorem{proposition}[theorem]{Proposition}
\theoremstyle{definition}
\newtheorem{example}{Example}[section]

\theoremstyle{remark}

\journal{Mathematics and Computers in Simulation}

\begin{document}

\begin{frontmatter}

\title{Numerical Reliability of Logistic Gene Regulatory Network Models:
       Preventing Expression Shutdown and Robust Integration of
       Boolean-Derived ODE Systems}

\author[inst1]{Ismail Belgacem}
\ead{ismail.belgacem.81@gmail.com}

\affiliation[inst1]{organization={Independent Researcher},
  addressline={Mezaourou},
  city={Ghazaouet},
  postcode={13421},
  state={Tlemcen},
  country={Algeria}}

\begin{abstract}
Gene regulatory networks are routinely translated from Boolean update rules
into large continuous ODE systems that must be integrated numerically for
attractor identification, parameter sensitivity analysis, and control design.
The reliability of that integration depends critically on the sigmoidal kernel
chosen to represent regulation. This paper presents a simulation study showing
that the Hill function---the near-universal choice---is a generically
unreliable kernel for this purpose, and that the logistic function is a robust
replacement. Two failure modes are demonstrated computationally. First, because
the Hill function vanishes identically at zero input, bistable circuits acquire
an absorbing off-state: using experimentally grounded \textit{Escherichia coli}
parameters for the galactose-operon positive-autoregulation motif, a Hill-based
model remains permanently trapped below the unstable separatrix, whereas the
logistic model---whose strictly positive basal rate is built into the
sigmoid---escapes the off-state in approximately $44$~minutes through basal
production alone, consistent with a conservative analytical estimate of
${\approx}58$~minutes and with \textit{gal}-operon induction kinetics. A
saddle-node analysis characterises the bistable parameter window through an
explicit transcendental equation and identifies the threshold $\lambda\theta=2$
separating monostable from bistable regimes. Second, when the Hill exponent is
non-integer---as it routinely is when fitting dose-response data---the power-law
$x^n$ requires evaluating $e^{n\ln x}$, which becomes complex-valued whenever an
adaptive solver overshoots into negative concentrations. On an $80$-gene
Boolean-derived benchmark with $n\approx3.509$, the Hill solver enters silent
complex-valued contamination from $t\approx52.64$ and produces smooth but
spurious trajectories with no visible artefact, whereas the logistic
formulation completes $t\in[0,200]$ without a single warning. Because the logistic right-hand side is
globally Lipschitz with a constant explicit in the network parameters, we
further establish an a priori global-error bound of classical order for
logistic-based integration---a convergence guarantee structurally
unavailable to the Hill formulation. The framework
is immediately deployable through standard numerical libraries and is natively
compatible with automatic-differentiation tools.
\end{abstract}

\begin{keyword}
Gene regulatory networks \sep numerical integration \sep logistic functions \sep
Hill functions \sep Boolean-derived ODE systems \sep bistability
\end{keyword}

\end{frontmatter}

\section{Introduction}
\label{sec:introduction}

Mathematical modelling of gene regulatory networks (GRNs) has become an
indispensable tool for understanding the dynamical behaviours that emerge from
transcriptional interactions and for guiding the rational design of synthetic
gene circuits. A standard workflow translates a Boolean description of the
network---a set of logical update rules over genes and their
negations~\citep{kauffman1969metabolic,albert2003topology}---into a continuous
system of ordinary differential equations, which is then integrated numerically
to identify attractors, perform parameter sensitivity analysis, and design
feedback control~\citep{belgacem2020control,chambon2020qualitative}. As the networks of interest grow to tens or hundreds of
genes, the numerical integration of the resulting high-dimensional ODE
systems~\citep{belgacem2025glass,farcot2019chaos} becomes the computational bottleneck, and its reliability becomes a
first-order concern.

That reliability depends critically on the sigmoidal kernel used to represent
each regulatory interaction. The near-universal choice is the Hill function,
with activation form $h^+(x,\theta,n) = x^n/(x^n+\theta^n)$ and repression
form $h^-(x,\theta,n) = \theta^n/(x^n+\theta^n)$. The Hill function is
intuitive and mechanistically grounded, but as a kernel for large-scale
numerical simulation it carries two liabilities that this paper makes precise
and demonstrates computationally.

The first liability is an \emph{absorbing off-state}. Because
$h^+(0,\theta,n)=0$ for every $n>0$, any gene whose sole activator is
momentarily absent is assigned a production rate of exactly zero. In a bistable
circuit this is not a transient inconvenience: the off-state becomes an
absorbing set from which no intrinsic dynamics can escape. This is also
biologically unfaithful. Experimental studies across bacterial operons,
eukaryotic promoters, and synthetic circuits consistently show that genes are
never fully silent: even under strong repression, minimal transcription
persists at $0.1$--$10\%$ of maximal expression, and this basal activity
prevents irreversible shutdown and primes the transcriptional machinery for
rapid induction~\citep{becskei2000engineering,lipshtat2006genetic,%
weickert1993galactose}. In the \textit{Escherichia coli} galactose operon,
leaky transcription primes the positive-feedback loop for rapid activation;
without it, cells remain locked in the off-state for
hours~\citep{weickert1993galactose,hua1974multiple}. A Hill-based model cannot
represent this mechanism, and at the network scale every randomly initialised
target whose activator starts below threshold is frozen from the first
integration step.

The second liability is \emph{silent numerical corruption}. When the
cooperativity exponent $n$ is non-integer---as it routinely is when fitting
experimental dose-response data, yielding values such as $n\approx1.39$,
$2.73$, or $3.52$~\citep{gottschalk2005five,reeve2013pharmacodynamic,%
santillan2008use}---the power-law $x^n$ must be evaluated as $e^{n\ln x}$,
which is not a real number once a state component overshoots into a negative
concentration---as adaptive solvers routinely do when a trajectory approaches
zero. In a computing environment that evaluates such fractional powers on the
complex principal branch, the solver then continues to integrate a
complex-valued surrogate system, producing smooth and visually plausible
trajectories that are not solutions of the true biological model, with no
visual artefact to alert the modeller. For $n\in(k,k+1)$ the function is only $C^k$-smooth, so the
high-order convergence guarantees of Runge--Kutta and multistep methods do not
apply near the origin.

The logistic function removes both liabilities. Its activation form
$f^+(x,\theta,\lambda) = 1/(1+e^{-\lambda(x-\theta)})$ and repression form
$f^-(x,\theta,\lambda) = 1/(1+e^{\lambda(x-\theta)})$ are globally $C^\infty$,
real-valued for every argument including negative ones, and strictly positive
at zero concentration for all finite $\lambda$ and $\theta$. Basal expression
is therefore built into the shape of the function rather than appended to it,
and the right-hand side of a logistic-based ODE system is globally Lipschitz on
all of $\mathbb{R}^N$, so standard stability theory applies everywhere,
including near and below zero.

This paper is a simulation study of the consequences of this choice. Building
on the product-of-logistics modelling framework established in the companion
paper~\citep{belgacem2026framework}---whose essential definitions are
recalled in Section~\ref{sec:prelim} so that the present paper is
self-contained---we proceed as follows.
Section~\ref{sec:biological} examines the prevention of expression shutdown in
noisy low-expression regimes: for the galactose-operon positive-autoregulation
motif with experimentally grounded \textit{E.~coli} parameters, numerical
simulation shows the logistic model escaping the off-state in approximately
$44$~minutes through basal production alone, while the Hill model remains
permanently confined; Theorem~\ref{thm:bistability} characterises the
saddle-node set through an explicit transcendental equation, and
Corollary~\ref{cor:autoreg_stability} classifies the stability of every
equilibrium; a small Boolean-derived network then closes the section,
exhibiting the same loss of basal expression at the network level.
Section~\ref{sec:numerical_comparison} reports the central computational
experiment: a four-stage simulation protocol that constructs an $80$-gene
Boolean network, translates it automatically into a continuous ODE system,
integrates the system, and extracts state snapshots. Under identical conditions
the Hill solver enters silent complex-valued contamination at $t\approx52.64$
and terminates near $t\approx63$--$65$, leaving roughly two-thirds of the
intended horizon covered only by unconstrained extrapolation, whereas the
logistic formulation completes $t\in[0,200]$ without a single warning.
Section~\ref{sec:numerical_comparison} concludes with a convergence
proposition (Proposition~\ref{prop:apriori}): the logistic right-hand side
meets the hypotheses of the classical error theory of one-step methods with
constants explicit in the network parameters, whereas the Hill right-hand
side structurally cannot; a fixed-step convergence study confirms that the
logistic system attains this guaranteed order in practice.
The framework requires only standard numerical integration libraries and is
natively compatible with automatic-differentiation tools.

\section{Preliminaries: The Logistic Modelling Framework}
\label{sec:prelim}

This section recalls, without proof, the elements of the product-of-logistics
modelling framework required for the numerical study that follows. Full
derivations, the stability analysis of the two-gene oscillator, the structural
properties of the De~Morgan map, and the well-posedness theory are given in the
companion paper~\citep{belgacem2026framework}; a broader exploration of
logistic alternatives to Hill functions in genetic-network modelling is
given in~\cite{belgacem2025exploring}.

\paragraph{Regulatory kernels.}
Throughout, regulation is represented by sigmoidal functions of protein
concentration. The \emph{logistic} activation and repression kernels are
\begin{equation}
  f^+(x,\theta,\lambda) = \frac{1}{1+e^{-\lambda(x-\theta)}},
  \qquad
  f^-(x,\theta,\lambda) = \frac{1}{1+e^{\lambda(x-\theta)}}
  = 1-f^+(x,\theta,\lambda),
  \label{eq:p2_logistic_kernels}
\end{equation}
with threshold $\theta>0$ and steepness $\lambda>0$; both are globally
$C^\infty$, real-valued on all of $\mathbb{R}$, and strictly positive at
$x=0$. For comparison, the \emph{Hill} kernels are
$h^+(x,\theta,n) = x^n/(x^n+\theta^n)$ and
$h^-(x,\theta,n) = \theta^n/(x^n+\theta^n)$, defined for $x\ge0$, with
$h^+(0,\theta,n)=0$ for every $n>0$.

\paragraph{Single-gene dynamics.}
For a network of $N$ genes, $x_i(t)$ denotes the concentration of the protein
product of gene $i$. Each gene obeys a balance between synthesis and
degradation,
\begin{equation}
  \dot{x}_i \;=\; \kappa_i\,\Phi_i(\mathbf{x}) \;-\; \gamma_i\,x_i,
  \label{eq:ode_0}
\end{equation}
where $\kappa_i>0$ is the maximal production rate, $\gamma_i>0$ the degradation
rate, and $\Phi_i\colon\mathbb{R}^{N}\to[0,1]$ the regulatory function that
synthesises all activating and repressing influences on gene $i$.

\paragraph{Multi-gene product-of-logistics system.}
When gene $i$ is regulated by a set $\mathcal{A}_i$ of activators and a set
$\mathcal{R}_i$ of repressors acting through independent binding sites, the
regulatory function is the product
\begin{equation}
  \dot{x}_i
  \;=\;
  \kappa_i
  \!\left(
    \prod_{j \in \mathcal{A}_i}
      \frac{1}{1 + e^{-\lambda(x_j - \theta_{ij})}}
    \;\cdot\;
    \prod_{k \in \mathcal{R}_i}
      \frac{1}{1 + e^{-\lambda(\theta_{ik} - x_k)}}
  \right)
  - \gamma_i x_i ,
  \label{eq:multi_gene_system}
\end{equation}
in which each activator contributes an increasing logistic factor and each
repressor a decreasing one. The right-hand side of~\eqref{eq:multi_gene_system}
is globally $C^\infty$ and globally Lipschitz on $\mathbb{R}^{N}$ with the
explicit constant
$L=\max_i\bigl(\gamma_i+\kappa_i\lambda(|\mathcal{A}_i|+|\mathcal{R}_i|)/4\bigr)$,
and the box $\prod_i[0,\kappa_i/\gamma_i]$ is forward-invariant; the system is
therefore globally well-posed~\citep{belgacem2026framework}.

\paragraph{Translating Boolean rules.}
A Boolean update rule is mapped to a continuous regulatory function $\Phi$ by
sending each positive literal to an increasing logistic, each negative literal
to a decreasing logistic, and each conjunction to a product. A disjunction
$C_1(\mathbf{x})\vee\cdots\vee C_m(\mathbf{x})$ is mapped through the recursive
De~Morgan product formula
\begin{equation}
  \Phi\!\left(\bigvee_{k=1}^{m} C_k(\mathbf{x})\right)
  \;=\;
  1 - \prod_{k=1}^{m}\bigl(1 - \Phi(C_k(\mathbf{x}))\bigr),
  \label{eq:demorgan_0}
\end{equation}
which keeps $\Phi\in[0,1]$ regardless of the number of independent regulatory
pathways. Formula~\eqref{eq:demorgan_0} is the continuous counterpart of the
classical De~Morgan law and coincides with the probability that at least one of
$m$ independent events occurs.

\paragraph{The two-gene logistic oscillator.}
The canonical negative-feedback motif, in which gene~1 is repressed by gene~2
and gene~2 is activated by gene~1, is governed by
\begin{equation}
\begin{aligned}
\dot{x}_1 &= \kappa_1\, f^-(x_2,\theta_2,\lambda) - \gamma_1 x_1
        \;=\; \kappa_1\, \frac{1}{1 + e^{\lambda (x_2 - \theta_2)}} - \gamma_1 x_1, \\[4pt]
\dot{x}_2 &= \kappa_2\, f^+(x_1,\theta_1,\lambda) - \gamma_2 x_2
        \;=\; \kappa_2\, \frac{1}{1 + e^{-\lambda (x_1 - \theta_1)}} - \gamma_2 x_2 .
\end{aligned}
\label{eq:oscillator}
\end{equation}
This planar system is globally asymptotically stable at its unique equilibrium
and admits no Hopf bifurcation, so sustained oscillations require explicit time
delays~\citep{belgacem2026framework,belgacem2026sustained,belgacem2026beyond}; it
serves below only as a low-expression test case.

\section{Biological Realism Through Low-Expression Modelling}
\label{sec:biological}

Real gene regulatory systems never fully shut down. Even under strong repression, promoters
exhibit low-level transcription, often referred to as ``promoter leakiness'' or basal activity,
driven by factors such as nucleosome positioning, stochastic pre-initiation complex formation,
and incomplete repressor binding. This basal expression is ubiquitous in GRNs, serving critical
functions: it reduces phenotypic noise by shifting gene expression distributions from multimodal
(favouring adaptive heterogeneity) to unimodal (promoting uniform responses), and it prevents
systems from trapping in irreversible off-states during stochastic fluctuations. The logistic
function naturally captures this imperfect inhibition. Unlike Hill functions, which drop to zero
at low input and can lock systems into unresponsive states in bistable or feedback motifs,
logistic models maintain a small but nonzero production rate, allowing noise-driven escapes and
rapid recovery. In resource-limited cellular environments, where maximal expression may be
unattainable even in the absence of repressors, the logistic's saturation reflects capacity
constraints rather than binding cooperativity, thereby enhancing biological realism in
large-scale GRNs.

\subsection{Preventing Shutdown in Noisy Biological Environments}

The always-positive production rate of logistic models ensures that systems remain responsive
to small perturbations, a feature crucial in noisy cellular environments. We illustrate this
advantage through two canonical examples: genetic oscillators and autoregulatory networks.

\begin{example}[Genetic Oscillator]
Consider a two-gene negative feedback loop in which the first gene activates the second, which
then represses the first. The Hill-based model is:
\begin{equation}
\begin{aligned}
\dot{x}_1 &= \kappa_1 \frac{\theta_2^n}{x_2^n + \theta_2^n} - \gamma_1 x_1, \\
\dot{x}_2 &= \kappa_2 \frac{x_1^n}{x_1^n + \theta_1^n} - \gamma_2 x_2.
\end{aligned}
\label{eq:oscillator_Hill}
\end{equation}
The logistic counterpart is given by Eq.~\eqref{eq:oscillator}.

To understand the difference, consider the system's behaviour starting from near-zero initial
conditions (\( x_1(0) \approx 0.02 \), \( x_2(0) \approx 0.02 \)) under strong degradation
(\(\gamma_1 = 8.0\), \(\gamma_2 = 5.0\)), with \( \lambda = n = 3 \), \(\kappa_1 = \kappa_2 =
0.5\), and \(\theta_1 = \theta_2 = 1.0\). Simulations are shown in
Figure~\ref{fig:Oscillateur_original_l_H}.

\begin{figure}[t]
    \centering
    \includegraphics[width=0.45\linewidth]{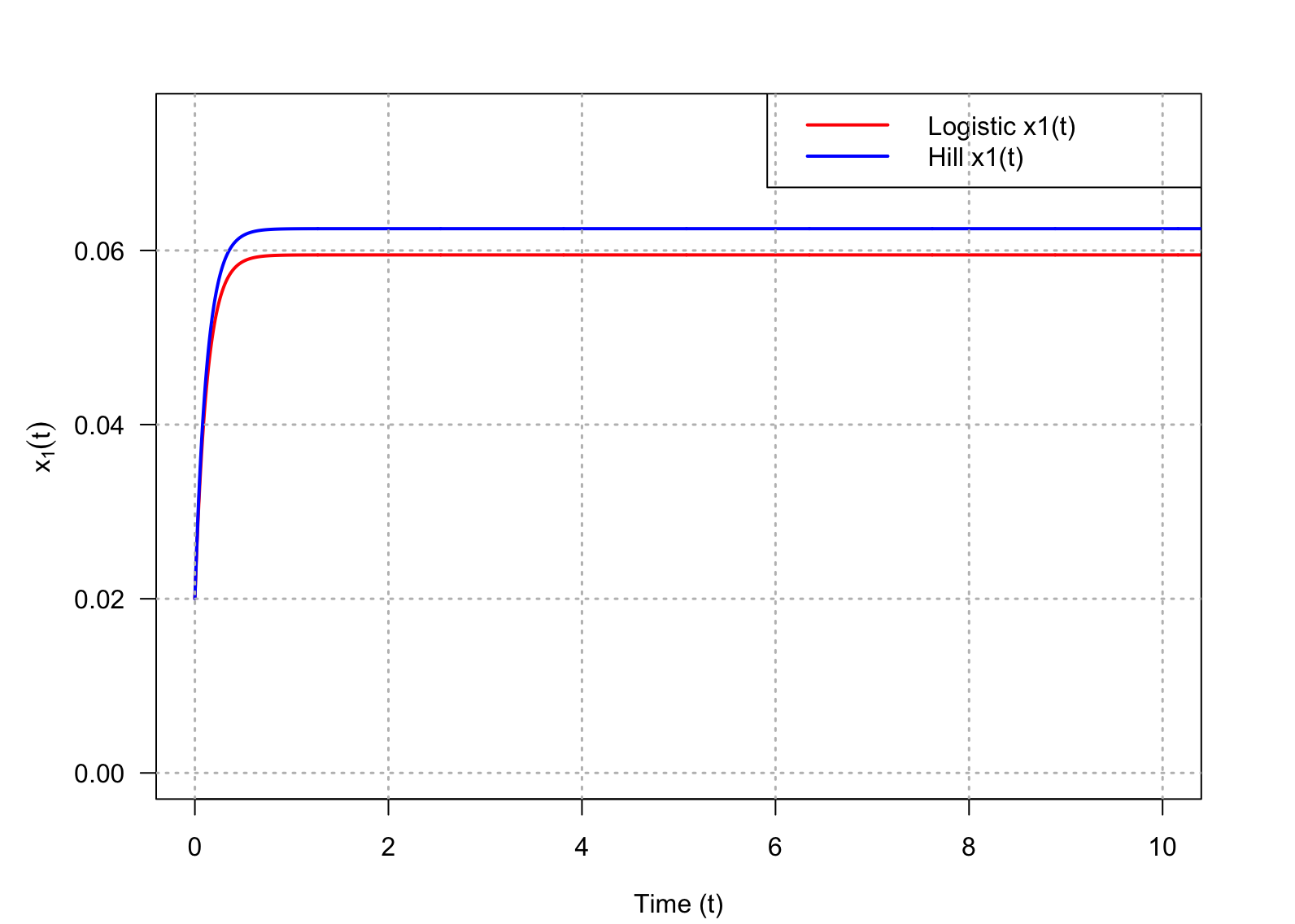}
    \includegraphics[width=0.45\linewidth]{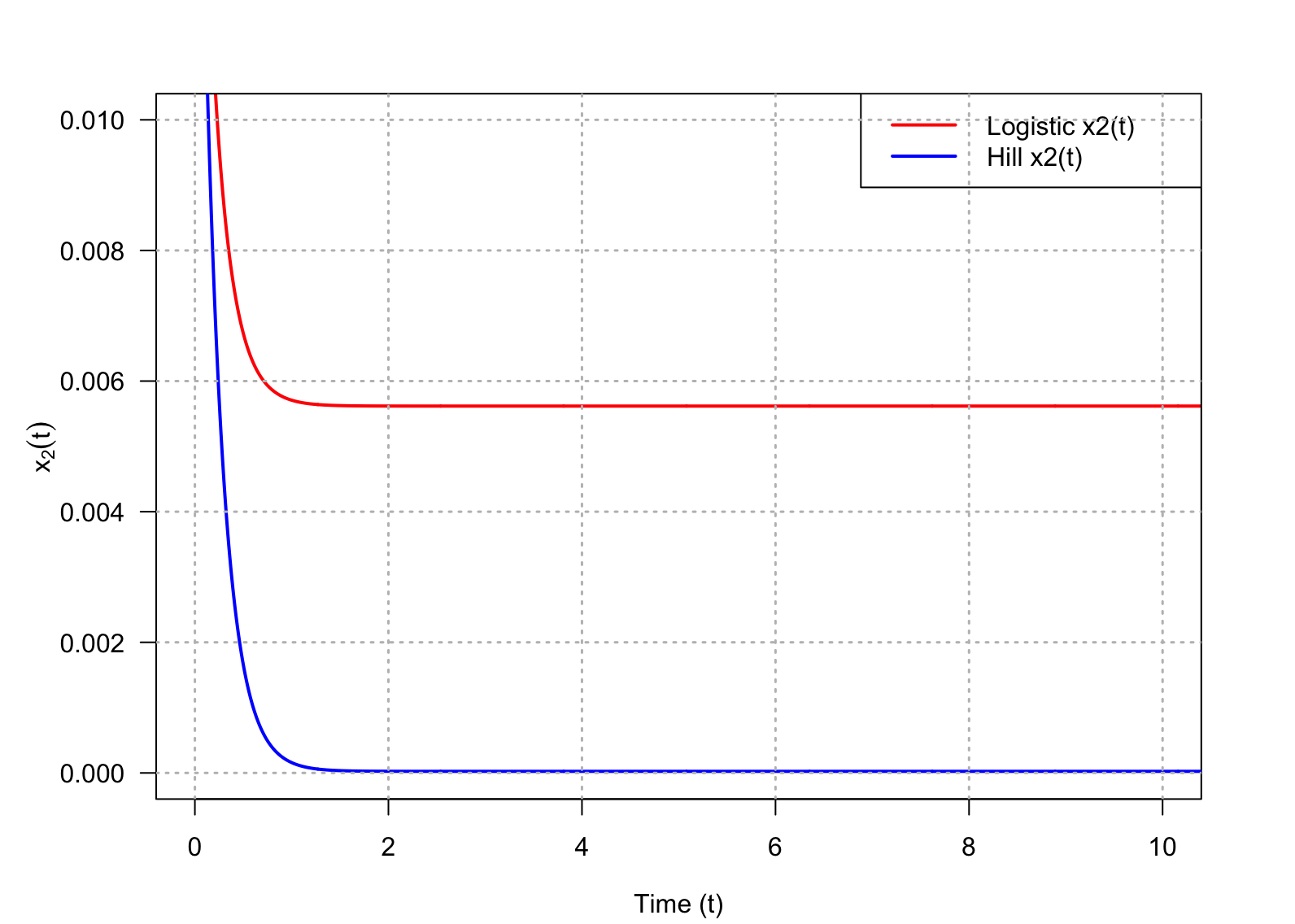}
    \caption{Trajectories of the genetic oscillator under small
    perturbations and strong degradation, comparing the logistic model
    (red) and the Hill model (blue). Left: \(x_1(t)\); Right: \(x_2(t)\).
    The contrast is sharpest in \(x_2(t)\): the logistic model retains a
    small but strictly nonzero steady expression level, whereas the Hill
    model collapses to a near-zero level, trapped by its vanishing basal
    production. Parameters: \(\lambda = n = 3\), \(\kappa_1=\kappa_2=0.5\),
    \(\gamma_1=8.0\), \(\gamma_2=5.0\), \(\theta_1=\theta_2=1.0\); initial
    conditions \(x_1(0)=x_2(0)=0.02\).}
    \label{fig:Oscillateur_original_l_H}
\end{figure}

The Hill model's zero response at zero input creates a critical vulnerability. When \( x_1
\approx 0 \), the production rate for \( x_2 \) is effectively zero: \(\kappa_2 \cdot 0 = 0\).
This complete shutdown traps the system in a low-expression state. For moderate cooperativity (\( n
= 3 \)), the Hill function is still steep enough to require a substantial increase in \( x_1 \) to generate
appreciable \( x_2 \). In noisy environments, small stochastic fluctuations in \( x_1 \) may be
insufficient to escape this off-state, delaying activation or disrupting oscillations entirely.

The logistic model, by contrast, maintains a positive production rate even
when the repressor $x_2$ is strongly expressed. The decreasing logistic
$f^-(x_2,\theta_2,\lambda) = 1/(1+e^{\lambda(x_2-\theta_2)})$ never reaches
zero: even at large repressor concentrations $x_2 \gg \theta_2$,
\[
\kappa_1 \cdot \frac{1}{1+e^{\lambda(x_2-\theta_2)}} > 0,
\]
ensuring that $x_1$ retains a small but persistent production rate that
cannot be entirely abolished by repression. Symmetrically, in the absence
of activator $x_1$, the increasing logistic $f^+(x_1,\theta_1,\lambda) =
1/(1+e^{-\lambda(x_1-\theta_1)})$ provides a basal activation level
$1/(1+e^{\lambda\theta_1}) > 0$, so production of $x_2$ continues at low
rate even when $x_1 = 0$. This slight but persistent production allows the
system to respond to small perturbations: if $x_1$ rises even modestly,
$x_2$ can respond, facilitating recovery or sustained oscillatory
dynamics. As Figure~\ref{fig:Oscillateur_original_l_H} demonstrates, the
logistic model escapes the low-expression trap, while the Hill model
remains stuck.

This behaviour aligns with experimental findings. Joanito et al.~\cite{joanito2020basal} examined
synthetic oscillators and demonstrated that basal transcriptional leakage has complex, model-dependent
effects on oscillatory dynamics. Their stochastic simulations showed that the impact of leakage
depends critically on whether control operates at the transcriptional or post-translational level:
in models with combined transcriptional and post-translational controls, a leakage level of
20\% of maximum expression enabled 38.1\% of parameter sets to sustain oscillation.
Independently, their analysis establishes that maintaining a minimum non-zero production rate
is essential for enabling gene circuits to escape low-expression traps under noise---precisely
the mechanism demonstrated here for the logistic model.
\end{example}

\subsection{Positive Autoregulation: Bistability and Basal Expression}
\label{ex:positive_autoregulation}

Positive autoregulation, in which a transcription factor activates its own gene expression,
exemplifies the profound interplay among low expression levels, molecular noise, and bistability
in gene regulatory networks (GRNs). This motif is not merely a theoretical construct but appears
pervasively in biological systems, most notably in the \textit{Escherichia coli} galactose
(\textit{gal}) operon, where regulators such as GalS self-activate to fine-tune metabolic
responses to environmental sugar availability~\citep{weickert1993galactose,hua1974multiple}. The
architectural elegance of this feedback loop lies in its dual nature: it enhances both response
speed and robustness to perturbations, yet simultaneously introduces bistability, the coexistence
of two stable steady states (high and low expression levels) separated by an unstable threshold.

The biological significance of bistability in positive autoregulation cannot be overstated. In
cellular environments characterised by stochasticity arising from low molecule counts, often
on the order of tens to a few hundred transcription factor molecules per cell, the system faces a critical
challenge: under low initial conditions or in the presence of stochastic fluctuations, the
regulatory circuit can become irreversibly trapped in an ``off'' state~\citep{becskei2000engineering}.
This phenomenon has profound implications for cellular decision-making, metabolic switching, and
developmental processes. Without intrinsic recovery mechanisms, such as basal (leaky)
transcription, cells may fail to respond appropriately to environmental cues, potentially
leading to metabolic failure or developmental arrest.

The choice of mathematical model used to describe transcriptional regulation profoundly impacts
our ability to capture these dynamics. Hill functions, widely employed for their sigmoidal shape
that mimics cooperative binding of transcription factors to DNA, make a critical simplifying
assumption: they posit zero basal expression in the absence of an activator. This assumption,
while mathematically convenient and appropriate for describing sharp, switch-like responses,
introduces both numerical instability and a fundamental failure to capture noise-driven escape
from low-expression states. In stark contrast, logistic functions possess an inherent non-zero
basal production rate, which more accurately reflects in vivo conditions where leaky
transcription occurs due to incomplete repression, constitutive (albeit weak) promoter activity,
or stochastic binding events~\citep{lipshtat2006genetic}. This distinction becomes particularly
critical in high-degradation or noisy cellular environments, where Hill-based models require the
ad hoc addition of leakage terms to match experimental observations, a limitation repeatedly
documented in both synthetic gene circuits and natural operons such as the \textit{gal}
system~\citep{becskei2000engineering}.

To capture the biological reality of gene expression, we employ a two-level mathematical
framework that explicitly separates mRNA (\(m\)) and protein (\(x\)) dynamics. This approach
offers substantially greater biological accuracy than single-variable approximations, which
implicitly assume instantaneous translation or quasi-steady-state mRNA levels. The separation is
justified by the distinct time scales: mRNA molecules typically have half-lives on the order of
minutes (rapid turnover), whereas proteins, particularly transcription factors in bacteria, can
persist for hours or even exhibit negligible degradation relative to cellular division
(dilution-dominated decay).

The logistic model for positive autoregulation is given by:
\begin{equation}
\begin{aligned}
\dot{m} &= k_m \frac{1}{1 + e^{-\lambda (x - \theta)}} - k_{dm} m, \\
\dot{x} &= k_p m - k_{dp} x,
\end{aligned}
\label{eq:autoreg_logistic}
\end{equation}
where the parameters carry the following biological interpretations. The maximum mRNA synthesis rate $k_m$ (in molecules per second) encapsulates promoter strength, RNA polymerase binding efficiency, and elongation rate, while $k_{dm}$ is the first-order mRNA decay rate, reflecting both enzymatic degradation by RNases and dilution due to cell growth. The protein synthesis rate per mRNA $k_p$ represents the translation efficiency (rate at which ribosomes translate each mRNA molecule), and $k_{dp}$ is the protein degradation or dilution rate, dominated by cell-division dilution for stable proteins. The steepness parameter $\lambda$ controls the sigmoidicity of the activation response and is analogous to the Hill coefficient $n$, reflecting the effective cooperativity of transcription-factor binding~\citep{becskei2000engineering}; the activation threshold $\theta$ is the protein concentration at which $f(\theta)=1/2$, corresponding to the effective dissociation constant or half-activation point~\citep{lipshtat2006genetic}.

The logistic activation function \(f(x) = \frac{1}{1 + e^{-\lambda (x - \theta)}}\) smoothly
interpolates between a basal production rate at low protein levels and maximal production at
saturation. Critically, at \(x = 0\) (complete absence of protein), the basal production rate
ensures continuous, albeit low-level, mRNA synthesis even in the absence of autoactivation. For
comparison, the corresponding Hill-based model is:
\begin{equation}
\begin{aligned}
\dot{m} &= k_m \frac{x^n}{x^n + c^n} - k_{dm} m, \\
\dot{x} &= k_p m - k_{dp} x,
\end{aligned}
\label{eq:autoreg_hill}
\end{equation}
where the Hill coefficient $n$ represents the effective number of transcription-factor binding sites or the degree of cooperative binding~\citep{santillan2008use}, and the half-activation constant $c$ is the protein concentration at which production reaches half-maximum ($h(c)=1/2$); the latter is often normalised to unity ($c=1$) for dimensionless analysis~\citep{cherry2000make}.

The Hill function \(h(x) = \frac{x^n}{x^n + c^n}\) exhibits the critical limitation that
\(h(0) = 0\), meaning that in the complete absence of protein, mRNA synthesis ceases entirely.
This creates a mathematical and biological problem: without any protein, the system cannot escape
from the zero state through its own dynamics, and recovery requires either large external
perturbations or the addition of ad hoc leakage terms.

The parameter values, summarised in Table~\ref{tab:parameters}, are derived from experimental measurements in \textit{E.~coli} and reflect the physiology of low-copy-number gene expression, where fast mRNA turnover contrasts sharply with persistent protein levels. The mRNA synthesis rate $k_m = 0.003$~molecules\,s$^{-1}$ is chosen from the lower end of the experimentally observed range $0.001$--$0.1$~molecules\,s$^{-1}$ for \textit{E.~coli} promoters in cell-free systems~\citep{marshall2019quantitative}, appropriate for average or regulated transcription in low-copy contexts ($1$--$10$ molecules). The mRNA degradation rate $k_{dm} = 0.001$~s$^{-1}$ corresponds to a half-life of about $11.5$~minutes, somewhat above the typical median of $5$--$7$~minutes reported under MOPS-glucose growth conditions~\citep{bernstein2002global} and emphasising the longer-lived transcripts characteristic of autoregulatory transcription factors. The protein synthesis rate $k_p = 0.002$~s$^{-1}$ lies well below the genome-wide mean translation-initiation rate of $0.1$--$0.3$~s$^{-1}$~\citep{li2014quantifying}, consistent with regulated low-expression genes. The protein degradation/dilution rate $k_{dp} = 0.00001$~s$^{-1}$ (half-life $\sim$$19$--$23$~hours) reflects the biological reality that many bacterial transcription factors are stable proteins whose levels are primarily controlled by dilution rather than active proteolysis~\citep{nath1970protein}; the choice is justified in detail below. Finally, the steepness/cooperativity is set to $\lambda = n = 3$ (moderate ultrasensitivity without extreme cooperativity) and the threshold is normalised to $\theta = c = 1$ for dimensionless analysis.

\begin{table}[!htbp]
\centering
\caption{Biophysical parameters for positive autoregulation model, grounded in \textit{E.~coli}
physiology and experimental measurements. Values are selected to represent low-copy, regulated
gene expression characteristic of transcription factor circuits.}
\small
\begin{tabular}{l c p{4.7cm} p{2.5cm}}
\toprule
Parameter & Value & Biological Context & Source \\
\midrule
\(k_m\) & 0.003~molec.\,s\(^{-1}\) & Maximum mRNA synthesis; selected from lower end of broad range
(0.001--0.1~molecules\,s\(^{-1}\)) for average/regulated promoters; reflects elongation \(\sim\)10--50~nt/s,
initiation \(\sim\)0.1--1/min & \citep{marshall2019quantitative} \\[0.3cm]
\(k_{dm}\) & 0.001~s\(^{-1}\) & mRNA degradation; half-life \(\sim\)11.5~min, from upper tail of
distribution (median 5--7~min; 80\% within 3--8~min; extremes \(<2\) to \(>30\)~min in
MOPS-glucose) & \citep{bernstein2002global} \\[0.3cm]
\(k_p\) & 0.002~s\(^{-1}\) & Protein synthesis per mRNA; below typical 0.1--0.3~s\(^{-1}\)
range for regulated genes; possibly adjusted for saturated translation at low mRNA copy numbers
& \citep{li2014quantifying} \\[0.3cm]
\(k_{dp}\) & 0.00001~s\(^{-1}\) & Protein degradation/dilution; half-life \(\sim\)19--23~hr;
approximated from slow component (0.5--1.5\%/hr) adjusted upward to \(\sim\)3.6\%/hr for
dilution in slow growth/stationary phase & \citep{nath1970protein} \\[0.3cm]
\(\lambda, n\) & 3 & Steepness/cooperativity; typical value for moderate ultrasensitivity in
bacterial transcription factors & \citep{becskei2000engineering} \\[0.3cm]
\(\theta, c\) & 1 & Threshold/half-activation; normalised affinity constant for dimensionless
analysis & \citep{lipshtat2006genetic} \\
\bottomrule
\end{tabular}
\label{tab:parameters}
\end{table}

The protein degradation/dilution rate $k_{dp} = 0.00001$~s$^{-1}$ warrants particular attention, as the corresponding half-life of approximately $19$--$23$~hours substantially exceeds typical bacterial protein turnover rates. Three biological considerations justify this choice. First, autoregulatory transcription factors, particularly those governing metabolic switching such as GalS in the \textit{gal} operon, display markedly longer half-lives than average cellular proteins: in their proteome-wide budding-yeast study, Belle et al.~\cite{belle2006quantification} found that proteins involved in regulatory roles tend to fall into a class of comparatively long-lived species, well above the genome-wide median, and an analogous pattern of regulator stability has been observed across organisms; a subset of stable regulators involved in bistable circuits exhibit even longer half-lives under slow-growth conditions where active proteolysis is downregulated. Second, in slowly growing or stationary-phase \textit{E.~coli}, protein levels are primarily controlled by dilution through cell division rather than by active proteolysis~\citep{nath1970protein}: under the nutrient-limitation or stationary-phase conditions relevant to metabolic switching, doubling times extend to $4$--$12$~hours or more, and cell division becomes the rate-limiting step for protein clearance, with the chosen $k_{dp}$ corresponding to a doubling-time-equivalent dilution rate of about $20$~hours. Third, the galactose-utilisation system---our biological archetype for positive autoregulation---operates precisely in these slow-growth or transition regimes: cells pre-grown on glucose and then shifted to galactose experience an extended lag phase ($1$--$3$~hours) during which the positive feedback loop must activate~\citep{weickert1993galactose,hua1974multiple}, and Weickert and Adhya~\cite{weickert1993galactose} demonstrated that GalS protein persists for multiple cell divisions after transcriptional shut-off, consistent with dilution-limited turnover rather than rapid proteolysis.

At steady state, the time derivatives in Equations~\eqref{eq:autoreg_logistic} and
\eqref{eq:autoreg_hill} vanish, yielding:
\begin{equation}
m^* = \frac{k_m}{k_{dm}} \, h(x^*), \qquad x^* = \frac{k_p}{k_{dp}} \, m^*,
\end{equation}
where \(h(x)\) denotes either the logistic function \(f(x)\) or the Hill function. Eliminating
the mRNA steady state \(m^*\) gives the fixed-point condition:
\begin{equation}
x^* = \underbrace{\frac{k_m k_p}{k_{dm} k_{dp}}}_{\displaystyle \alpha} \, h(x^*).
\label{eq:fixedpoint}
\end{equation}

\subsubsection{Biological Interpretation of $\alpha$}
The dimensionless parameter $\alpha$
represents the overall loop gain or feedback amplification:
\begin{equation}
\alpha = \frac{k_m k_p}{k_{dm} k_{dp}}
= \underbrace{\frac{k_m}{k_{dm}}}_{\substack{\text{transcriptional} \\ \text{gain}}}
\times
\underbrace{\frac{k_p}{k_{dp}}}_{\substack{\text{translational} \\ \text{gain}}}.
\end{equation}
The transcriptional gain \(k_m/k_{dm}\) represents the maximum steady-state mRNA level when the
promoter is fully activated. The translational gain \(k_p/k_{dp}\) represents the number of
protein molecules produced per mRNA at steady state. Their product \(\alpha\) thus quantifies
the maximum protein level achievable when activation is saturated (\(h(x) \to 1\)), giving
\(x_{\text{ss}} \approx \alpha\).

Taking these parameter values (above), the overall feedback amplification is:
\begin{equation}
\alpha = \frac{k_m k_p}{k_{dm} k_{dp}}
= \frac{(0.003)(0.002)}{(0.001)(0.00001)}
= \frac{6 \times 10^{-6}}{1 \times 10^{-8}} = 600.
\end{equation}
This value of \(\alpha \approx 600\) reflects the characteristic physiology of \textit{E.~coli}:
fast mRNA turnover coupled with exceptionally stable proteins, yielding a system with very high
loop gain.

\subsubsection{Bistability Analysis for the Hill Model}
For the Hill function $h(x) = \frac{x^n}{x^n + c^n}$,
bistability requires multiple intersections of the nullcline \(x = \alpha h(x)\) with the
identity line \(y = x\). For small \(\alpha\), there is only one intersection (monostable at low
expression). For large \(\alpha\), three intersections can occur: two stable fixed points (low
and high expression) separated by an unstable saddle point.

The critical condition for bistability arises at the tangency (saddle-node bifurcation) points,
where the nullcline becomes tangent to the identity line. At tangency:
\begin{equation}
x = \alpha h(x), \qquad 1 = \alpha h'(x).
\label{eq:tangency_hill}
\end{equation}
Eliminating \(\alpha\) gives:
\begin{equation}
\frac{x}{h(x)} = \frac{1}{h'(x)}.
\end{equation}
For the Hill function, $h'(x) = \frac{n c^n x^{n-1}}{(x^n + c^n)^2}$. Substituting $h$ and $h'$ into the tangency condition $x/h(x) = 1/h'(x)$ yields
\begin{equation}
x^n = (n-1)\,c^n,
\qquad
x_{\mathrm{crit}} = c\,(n-1)^{1/n}.
\end{equation}
Substituting $x_{\mathrm{crit}}$ back, the Hill value at the tangent is $h(x_{\mathrm{crit}}) = (n-1)/n$, and the critical amplification is
\begin{equation}
\alpha_{\mathrm{crit}}
= \frac{x_{\mathrm{crit}}}{h(x_{\mathrm{crit}})}
= \frac{n\,c}{(n-1)^{(n-1)/n}}.
\label{eq:alpha_crit_hill}
\end{equation}
For \(n = 3\) and \(c = 1\):
\begin{equation}
\alpha_{\mathrm{crit}}
= \frac{3}{2^{2/3}}
\approx \frac{3}{1.587}
\approx 1.89.
\end{equation}

Since our system has \(\alpha \approx 600 \gg 1.89\), the Hill model is well within the
\textbf{bistable} regime, supporting \textbf{two} stable steady states: the absorbing state
\(x = 0\) and a high-expression state near \(x \approx 600\), separated by an unstable
separatrix near \(x \approx 0.041\). Crucially, because \(h(0) = 0\), the origin
\((m, x) = (0, 0)\) is \emph{always} a stable fixed point of the Hill system: its Jacobian
eigenvalues \(-k_{dm}\) and \(-k_{dp}\) are both strictly negative for \emph{all} values of
\(\alpha\), regardless of how large \(\alpha\) becomes. The zero state is therefore never
eliminated by strong feedback: a trajectory starting at \(x(0) = 0.01\), which lies below the
unstable separatrix at \(x^* \approx 0.041\), decays irreversibly toward \(x = 0\). This is
precisely the trapping mechanism that the logistic model overcomes.

In summary, higher cooperativity $n$ lowers $\alpha_{\mathrm{crit}}$, meaning
less amplification is required to achieve bistability. Stronger ultrasensitivity enables
bistability with weaker feedback loops.

\subsubsection{Bistability Analysis for the Logistic Model}
For the logistic function
$f(x) = \frac{1}{1 + e^{-\lambda(x-\theta)}}$, the analysis is similar but yields a
\emph{finite} bistable range due to non-zero basal activity. The derivative
\begin{equation}
f'(x) = \lambda\,f(x)\,(1 - f(x))
\end{equation}
achieves its maximum \(\lambda/4\) at \(x = \theta\).

At saddle-node bifurcations:
\begin{equation}
x = \alpha f(x), \qquad 1 = \alpha f'(x).
\label{eq:tangency_logistic}
\end{equation}
Eliminating \(\alpha\):
\begin{equation}
x = \frac{1}{\lambda\,(1 - f(x))}.
\end{equation}
Setting \(y = f(x)\) and \(z = \lambda(x - \theta)\), so that
\(y = 1/(1 + e^{-z})\) and \(x = \theta + z/\lambda\), we obtain the transcendental equation:
\begin{equation}
\lambda\theta + z = 1 + e^{z}.
\label{eq:transcendental}
\end{equation}
For each root \(z\), the critical amplification is:
\begin{equation}
\alpha_{\mathrm{crit}}
= \frac{1}{\lambda\,y\,(1-y)}.
\end{equation}

For \(\lambda = 3\) and \(\theta = 1\), Equation~\eqref{eq:transcendental} becomes
\(e^z - z - 2 = 0\), with two roots. We order them by their \(\alpha_{\mathrm{crit}}\) values,
which is the biologically meaningful ordering:

The first root, $z \approx 1.1462$, gives $y \approx 0.7588$ and the lower critical amplification $\alpha_{\mathrm{crit}} \approx 1.821$, at which the high-expression state is born (upper-state saddle-node). The second root, $z \approx -1.8414$, gives $y \approx 0.1369$ and the upper critical amplification $\alpha_{\mathrm{crit}} \approx 2.821$, at which the low-expression state is annihilated (lower-state saddle-node).

Thus, the logistic model exhibits bistability in the range:
\begin{equation}
1.821 \;<\; \alpha \;<\; 2.821.
\end{equation}
The lower \(\alpha\) threshold marks the emergence of the high-expression state (saddle-node
bifurcation creating the high and unstable fixed points), while the upper \(\alpha\) threshold
marks the annihilation of the low-expression state once basal production overwhelms degradation.

With \(\alpha \approx 600 \gg 2.821\), our logistic system lies \textbf{above} the upper
bistability threshold and is therefore \textbf{monostable at the high state}: basal production
is so strong relative to degradation that, even starting from zero, the system is always driven
upward. We operate deliberately in this monostable-high regime in order to isolate the basal
production escape mechanism; bistability would require \(1.821 < \alpha < 2.821\).

Table~\ref{tab:alpha_crit_logistic} summarises critical amplification values for different
steepness parameters.

\begin{table}[!htbp]
\centering
\caption{Critical amplification \(\alpha_{\mathrm{crit}}\) for bistability in logistic-based
positive autoregulation with \(\theta = 1\). The bistable range \textbf{expands} as steepness
increases: the lower threshold decreases toward~1 (bistability requires less amplification)
while the upper threshold diverges to infinity (the low state persists over an ever-wider range
of \(\alpha\)). For \(\lambda = 2\), the product \(\lambda\theta = 2\) is the degenerate onset
boundary (single tangency at \(\alpha = 2\); no bistable range exists). For
\(\lambda \to \infty\), the logistic approaches a step function, recovering bistability for all
\(\alpha > 1\) with the upper threshold diverging.}
\begin{tabular}{ccc}
\toprule
Steepness \(\lambda\) & Lower \(\alpha_{\mathrm{crit}}\) & Upper \(\alpha_{\mathrm{crit}}\) \\
\midrule
2          & \multicolumn{2}{c}{Degenerate (\(\lambda\theta = 2\)); no bistable range} \\
3          & \(\approx 1.82\) & \(\approx 2.82\) \\
4          & \(\approx 1.68\) & \(\approx 5.28\) \\
5          & \(\approx 1.58\) & \(\approx 11.1\) \\
\(\infty\) & \(1\)            & \(\infty\)       \\
\bottomrule
\end{tabular}
\label{tab:alpha_crit_logistic}
\end{table}

\noindent\textit{Interpretation.} Higher steepness $\lambda$ \textbf{widens} the bistable range by
lowering the minimal \(\alpha_{\mathrm{crit}}\) (less amplification required to enter
bistability) while simultaneously \textbf{raising} the maximal \(\alpha_{\mathrm{crit}}\)
(stronger feedback required to annihilate the low state). As \(\lambda \to \infty\), the lower
threshold approaches~1 and the upper threshold diverges, so the low-expression state becomes
progressively harder to destroy. This is analogous to higher cooperativity in Hill functions,
which facilitates the onset of bistability while extending the amplification range over which
both states coexist.

The saddle-node structure summarised in
Table~\ref{tab:alpha_crit_logistic} can be sharpened into an explicit
characterisation that is convenient for parameter-estimation and
control-design purposes.

\begin{theorem}[Saddle-node characterisation and asymptotic widening of the bistable range]
\label{thm:bistability}
Let $\lambda,\theta>0$ and consider the steady-state condition
$x=\alpha f(x)$ for the logistic positive-autoregulation
model~\eqref{eq:autoreg_logistic}, where $f(x)=1/(1+e^{-\lambda(x-\theta)})$.
\begin{enumerate}
\item[(i)] (\emph{Saddle-node set.})  Saddle-node bifurcations occur exactly at
the parameter pairs $(\alpha,x)$ with $z=\lambda(x-\theta)$ satisfying the transcendental equation
\begin{equation}
  e^{z} - z \;=\; \lambda\theta - 1,
  \qquad
  \alpha_{\mathrm{crit}}(z) \;=\; \frac{(1+e^{z})(1+e^{-z})}{\lambda} \;=\; \frac{2+e^{z}+e^{-z}}{\lambda}.
  \label{eq:saddle_node_canonical}
\end{equation}
\item[(ii)] (\emph{Number of saddle-nodes.})  Equation~\eqref{eq:saddle_node_canonical} has
\emph{(a)} no real root if $\lambda\theta<2$, \emph{(b)} a single double root $z=0$ at
$\lambda\theta=2$ (degenerate cusp at $\alpha=2$), and \emph{(c)} exactly two real roots
$z_{-}<0<z_{+}$ for every $\lambda\theta>2$.  Bistability occurs precisely on the open
interval
\begin{equation}
  \alpha_{\mathrm{lower}} \;=\; \alpha_{\mathrm{crit}}(z_{+})
  \;<\; \alpha
  \;<\; \alpha_{\mathrm{crit}}(z_{-}) \;=\; \alpha_{\mathrm{upper}}.
\end{equation}
\item[(iii)] (\emph{Asymptotic widening as $\lambda\to\infty$ with $\theta$ fixed.})
The endpoints satisfy
\begin{equation}
  \alpha_{\mathrm{lower}} \;\longrightarrow\; \theta,
  \qquad
  \alpha_{\mathrm{upper}} \;\sim\; \frac{e^{\lambda\theta-1}}{\lambda}
  \;\longrightarrow\;\infty,
  \label{eq:bistability_asymptotics}
\end{equation}
so the bistable interval $(\alpha_{\mathrm{lower}},\alpha_{\mathrm{upper}})$ unboundedly
widens, recovering the step-function limit $(\theta,\infty)$.
\end{enumerate}
\end{theorem}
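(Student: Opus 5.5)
The plan is to reduce everything to the scalar function $A(x)=x/f(x)=x\bigl(1+e^{-\lambda(x-\theta)}\bigr)$, $x>0$. Since $\alpha f(x)>0$, every steady state of~\eqref{eq:autoreg_logistic} has $x^*>0$. The steady states for a given $\alpha$ are then exactly the solutions of $A(x)=\alpha$, with $m^*=(k_{dp}/k_p)x^*$.

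For (i), I will start from the tangency system~\eqref{eq:tangency_logistic}. Eliminating $\alpha$ as already done in the text gives $x=1/(\lambda(1-f(x)))$. With $z=\lambda(x-\theta)$ and $1-f=1/(1+e^{z})$, this reads $\lambda\theta+z=1+e^{z}$, which is~\eqref{eq:transcendental}, i.e.\ $e^z-z=\lambda\theta-1$. For the critical amplification I use $f(1-f)=e^{-z}/(1+e^{-z})^2=1/\bigl((1+e^z)(1+e^{-z})\bigr)$, so $\alpha_{\mathrm{crit}}=1/(\lambda f(1-f))$ becomes the stated formula. To justify that these are genuine saddle-nodes, I will observe that $A'(x)=0$ is equivalent to $1=\alpha f'(x)$ on the curve $\alpha=A(x)$. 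I will also compute that the Jacobian of the planar system has determinant $k_{dm}k_{dp}\bigl(1-\alpha f'(x^*)\bigr)$ and trace $-(k_{dm}+k_{dp})<0$. Hence an equilibrium is a stable node or focus when $\alpha f'(x^*)<1$ and a saddle when $\alpha f'(x^*)>1$, and the sign change occurs exactly at the tangency points. Since $\alpha f'(x^*)=f'(x^*)x^*/f(x^*)=1-x^*A'(x^*)/A(x^*)$ on the curve $\alpha=A(x^*)$, stability is equivalent to $A'(x^*)>0$ and saddle type to $A'(x^*)<0$.

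For (ii), let $g(z)=e^z-z$. It is strictly convex, with minimum $g(0)=1$ and $g\to\infty$ as $z\to\pm\infty$. So $g(z)=\lambda\theta-1$ has no root if $\lambda\theta<2$, the double root $z=0$ if $\lambda\theta=2$ (giving $\alpha_{\mathrm{crit}}=4/\lambda=2$ after using $\lambda\theta=2$, $\theta=1$ normalisation as in the table), and exactly two roots $z_-<0<z_+$ if $\lambda\theta>2$. I must also check that both roots correspond to $x>0$, i.e.\ $z_->-\lambda\theta$. This holds because $g(-\lambda\theta)=e^{-\lambda\theta}+\lambda\theta>\lambda\theta-1$ and $g$ is decreasing on $(-\infty,0)$.

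For the global picture, note $A(0^+)=0$ and $A(x)\to\infty$ as $x\to\infty$. Since the critical points of $A$ are exactly $x_\pm=\theta+z_\pm/\lambda$, $A$ increases on $(0,x_-)$, decreases on $(x_-,x_+)$ and increases afterwards; when $\lambda\theta\le2$, $A$ is monotone. Consequently $A(x_+)<A(x_-)$, and $A(x)=\alpha$ has three solutions precisely when $\alpha_{\mathrm{crit}}(z_+)<\alpha<\alpha_{\mathrm{crit}}(z_-)$. In that case two of them lie on increasing branches (stable) and one on the decreasing branch (saddle), which is bistability. Outside this interval there is a single stable equilibrium. I expect this step—linking the count of roots of $A=\alpha$ and the sign of $A'$ to the stability type of the two-dimensional system—to be the main point requiring care, but the determinant computation above handles it.

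For (iii), I will use the defining equation to eliminate the exponentials. For $z_+$ we have $e^{z_+}=\lambda\theta-1+z_+$, so $z_+\sim\ln(\lambda\theta)=O(\ln\lambda)$. Then
\[
\alpha_{\mathrm{lower}}=\frac{\lambda\theta+1+z_++e^{-z_+}}{\lambda}\longrightarrow\theta .
\]
For $z_-$, the equation $g(z_-)=\lambda\theta-1$ forces $z_-\to-\infty$, and $z_-=-(\lambda\theta-1)+e^{z_-}$ with $e^{z_-}\to0$. Hence $e^{-z_-}=e^{\lambda\theta-1}e^{-e^{z_-}}\sim e^{\lambda\theta-1}$. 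It follows that
\[
\alpha_{\mathrm{upper}}=\frac{2+e^{z_-}+e^{-z_-}}{\lambda}\sim\frac{e^{\lambda\theta-1}}{\lambda}\to\infty ,
\]
which yields~\eqref{eq:bistability_asymptotics} and the step-function limit $(\theta,\infty)$.
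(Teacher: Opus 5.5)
Your argument is correct. Parts (i) and (iii) follow the paper's proof: the same elimination of $\alpha$ from the tangency pair, the same simplification of $f(1-f)$, and the same substitution $e^{z_\pm}=\lambda\theta-1+z_\pm$ to expand the two endpoints.

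The real difference is in (ii). The paper argues only from the convexity of $g(z)=e^{z}-z$ and the fact that $\alpha_{\mathrm{crit}}$ grows with $|z|$. That does not by itself decide which root lies farther from $0$. One needs $|z_+|<|z_-|$, which follows from $g(z)-g(-z)=2(\sinh z-z)>0$ for $z>0$, but the paper never states it. The paper also leaves the equilibrium count and the stability types to Corollary~\ref{cor:autoreg_stability}.

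Your reduction to $A(x)=x/f(x)$ handles all of this with one function:
\begin{itemize}
\item the local maximum at $x_-$ followed by the local minimum at $x_+$ gives $\alpha_{\mathrm{crit}}(z_+)<\alpha_{\mathrm{crit}}(z_-)$ for free;
\item $z_->-\lambda\theta$ puts both folds in $x>0$;
\item counting solutions of $A(x)=\alpha$ shows there are three equilibria exactly on the open interval;
\item the identity $\alpha f'(x^*)=1-x^*A'(x^*)/A(x^*)$, together with $\det J=k_{dm}k_{dp}\bigl(1-\alpha f'(x^*)\bigr)$, turns the monotonicity of $A$ into the node/saddle classification.
\end{itemize}
So ``bistability precisely on the open interval'' is proved rather than asserted. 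The cost is that you absorb most of the corollary into the theorem.

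One step should be made explicit. Knowing that the critical points of $A$ are $x_\pm$ does not by itself give the increase/decrease/increase pattern. Write $A'(x)=e^{-z}\bigl(g(z)-(\lambda\theta-1)\bigr)$; the sign of $A'$ can then be read off from the convexity of $g$. This also gives strict monotonicity of $A$ when $\lambda\theta\le 2$, and $g'(z_\pm)\neq 0$ makes the two folds nondegenerate.

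Your remark about the cusp is right. In general it sits at $\alpha=4/\lambda=2\theta$, so the stated value $\alpha=2$ holds only for $\theta=1$.
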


\begin{proof}
\emph{(i)} The simultaneous tangency conditions $x=\alpha f(x)$, $1=\alpha f'(x)$ together with
$f'=\lambda f(1-f)$ give $1/x=\lambda(1-f(x))$, i.e.\ $\lambda x(1-y)=1$ where $y=f(x)$.
Substituting $x=\theta+z/\lambda$ and $y=1/(1+e^{-z})$ rearranges to
$\lambda\theta+z=1+e^{z}$, i.e.\ $e^{z}-z=\lambda\theta-1$.  The corresponding $\alpha$ value
is $\alpha_{\mathrm{crit}}=x/y=1/(\lambda y(1-y))$, which simplifies via
$y(1-y)=e^{-z}/(1+e^{-z})^{2}$ to $\alpha_{\mathrm{crit}}=(1+e^{z})(1+e^{-z})/\lambda$.

\emph{(ii)} The function $g(z)=e^{z}-z$ has $g'(z)=e^{z}-1$, so $g$ is strictly convex with a
unique minimum $g(0)=1$.  The equation $g(z)=\lambda\theta-1$ has zero, one, or two real roots
depending on whether $\lambda\theta-1$ is below, equal to, or above $1$, i.e.\ on the sign of
$\lambda\theta-2$.  In the bistable case the two roots straddle $z=0$, and $\alpha_{\mathrm{crit}}$
is monotone in $|z|$ on each side of $z=0$, giving the ordering claimed.

\emph{(iii)} As $\lambda\to\infty$ with $\theta$ fixed,
$g(z)=\lambda\theta-1\to\infty$.  Using the closed form
$\alpha_{\mathrm{crit}}(z)=(e^{z}+2+e^{-z})/\lambda$ from~(i), it suffices
to track $e^{\pm z_{\pm}}$.  The positive root $z_{+}$ satisfies
$e^{z_{+}}=\lambda\theta-1+z_{+}$ with $z_{+}=\ln(\lambda\theta-1)+o(1)$;
hence $e^{-z_{+}}=O(1/\lambda)$ and
\[
\begin{aligned}
\alpha_{\mathrm{lower}}
&\;=\;
\frac{e^{z_{+}}+2+e^{-z_{+}}}{\lambda}
\;=\;
\theta + \frac{1+z_{+}+e^{-z_{+}}}{\lambda}\\[2pt]
&\;=\;
\theta + \frac{1+\ln(\lambda\theta-1)}{\lambda}+o\!\left(\tfrac{1}{\lambda}\right)
\;\longrightarrow\;\theta.
\end{aligned}
\]
The negative root $z_{-}$ satisfies $e^{z_{-}}-z_{-}=\lambda\theta-1$ with
$e^{z_{-}}\to 0$, so $-z_{-}=\lambda\theta-1-e^{z_{-}}$ and
$e^{-z_{-}}\sim e^{\lambda\theta-1}$.  Therefore
\[
\alpha_{\mathrm{upper}}
\;=\;
\frac{e^{z_{-}}+2+e^{-z_{-}}}{\lambda}
\;\sim\;
\frac{e^{\lambda\theta-1}}{\lambda}
\;\longrightarrow\;\infty.\qedhere\]
\end{proof}

The asymptotic estimates~\eqref{eq:bistability_asymptotics} are
quantitatively informative. The leading-order expansion of part~(iii) gives
$\alpha_{\mathrm{lower}}\approx\theta+(1+\ln(\lambda\theta-1))/\lambda$ and
$\alpha_{\mathrm{upper}}\approx e^{\lambda\theta-1}/\lambda$. At $\lambda=5$,
$\theta=1$ these predict $\alpha_{\mathrm{lower}}\approx 1+(1+\ln 4)/5\approx 1.48$
and $\alpha_{\mathrm{upper}}\approx e^{4}/5\approx 10.9$, in close agreement
with the exact values $1.58$ and $11.1$ in
Table~\ref{tab:alpha_crit_logistic}.  At $\lambda=10$, the same expansions
give $\alpha_{\mathrm{lower}}\approx 1.32$ and
$\alpha_{\mathrm{upper}}\approx 810$, again matching the exact roots to
leading order. The lower threshold approaches its limit $\theta$ through a
slow $(1+\ln\lambda)/\lambda$ correction, while the upper threshold diverges
exponentially in $\lambda\theta$. In control-design terms, the parameter
$\lambda$ acts as a single sigmoidicity knob that slides the low/high state
boundaries apart, exponentially on the upper side and only logarithmically
on the lower.

Theorem~\ref{thm:bistability} characterises the saddle-node \emph{set} but
does not classify the stability of the equilibria themselves; the next
corollary completes the bifurcation picture for the two-dimensional
mRNA--protein system~\eqref{eq:autoreg_logistic} by combining the
saddle-node analysis with a Jacobian computation and the Bendixson
criterion.

\begin{corollary}[Stability classification and global behaviour for the autoregulation model]
\label{cor:autoreg_stability}
Let $k_m,k_{dm},k_p,k_{dp},\lambda,\theta>0$ and consider the
mRNA--protein system~\eqref{eq:autoreg_logistic} on $\mathbb{R}^{2}$, with
$\alpha=k_m k_p/(k_{dm} k_{dp})$ and $f(x)=1/(1+e^{-\lambda(x-\theta)})$.
Let $\alpha_{\mathrm{lower}}$ and $\alpha_{\mathrm{upper}}$ be as in
Theorem~\ref{thm:bistability} (defined when $\lambda\theta>2$).  Then:
\begin{enumerate}
\item[(i)] (\emph{Forward invariance and absence of closed orbits.}) The
closed rectangle
$\mathcal{B}_{\mathrm{auto}}=[0,k_m/k_{dm}]\times[0,\alpha]$ is forward
invariant; every trajectory of~\eqref{eq:autoreg_logistic} on $\mathbb{R}^{2}$
enters $\mathcal{B}_{\mathrm{auto}}$ in finite time.  The divergence of the
right-hand side is $-(k_{dm}+k_{dp})<0$ throughout $\mathbb{R}^{2}$, so by
Bendixson's negative criterion no closed orbit exists.
\item[(ii)] (\emph{Local stability classification.}) An equilibrium
$(m^{*},x^{*})$ with $x^{*}=\alpha f(x^{*})$ is locally asymptotically
stable iff $\alpha f'(x^{*})<1$, and is a saddle iff $\alpha f'(x^{*})>1$;
saddle-node bifurcations correspond to $\alpha f'(x^{*})=1$, in agreement
with Theorem~\ref{thm:bistability}.
\item[(iii)] (\emph{Global behaviour.})
\begin{itemize}
\item If $\lambda\theta\leq 2$, or $\lambda\theta>2$ and
  $\alpha\notin[\alpha_{\mathrm{lower}},\alpha_{\mathrm{upper}}]$, then the
  unique equilibrium of~\eqref{eq:autoreg_logistic} is globally
  asymptotically stable on $\mathbb{R}^{2}$.
\item If $\lambda\theta>2$ and
  $\alpha\in(\alpha_{\mathrm{lower}},\alpha_{\mathrm{upper}})$, then there
  are exactly three equilibria; the lowest and highest are locally
  asymptotically stable and the middle is a saddle, and
  $\mathbb{R}^{2}_{\geq 0}$ minus the saddle is partitioned into the basins
  of the two stable equilibria by the one-dimensional stable manifold of
  the saddle (the separatrix).
\end{itemize}
\end{enumerate}
\end{corollary}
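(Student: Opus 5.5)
The proof splits along the three parts of the statement. All of it rests on the structure of \eqref{eq:autoreg_logistic}: it is a planar cooperative (monotone) system with constant diagonal Jacobian entries $-k_{dm}$ and $-k_{dp}$. We use Theorem~\ref{thm:bistability} only to locate the equilibria.

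For (i), we check the vector field on the boundary of $\mathcal{B}_{\mathrm{auto}}$, using $0<f<1$.
\begin{itemize}
\item At $m=0$ we have $\dot m=k_mf(x)>0$, and at $m=k_m/k_{dm}$ we have $\dot m=k_m(f(x)-1)<0$.
\item At $x=0$ with $m\ge0$ we have $\dot x=k_pm\ge0$.
\item At $x=\alpha$ with $m\le k_m/k_{dm}$ we have $\dot x\le k_pk_m/k_{dm}-k_{dp}\alpha=0$.
\end{itemize}
So the field points inward or is tangent, and forward invariance follows; Nagumo's criterion applies since the field is Lipschitz.

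For finite-time entry we use the explicit solution formulas. The $m$-equation gives $m(t)=m(0)e^{-k_{dm}t}+\int_0^t k_m f\,e^{-k_{dm}(t-s)}ds$. Since $0<f<1$, $m$ tends into $(0,k_m/k_{dm})$ and stays there after a finite time, and it enters the open interval strictly because $f$ is bounded away from $0$ and $1$ on bounded sets. Feeding this into the linear $x$-equation gives the same conclusion for $x$. Finite-time (rather than asymptotic) entry needs the inequalities to be strict, so we may enlarge the box by $\varepsilon$ and argue by continuity. The divergence is $\partial_m\dot m+\partial_x\dot x=-k_{dm}-k_{dp}$, and Bendixson's criterion then applies directly.

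For (ii), the Jacobian at $(m^*,x^*)$ is
$$J=\begin{pmatrix}-k_{dm}&k_mf'(x^*)\\ k_p&-k_{dp}\end{pmatrix}.$$
Its trace is negative and $\det J=k_{dm}k_{dp}\bigl(1-\alpha f'(x^*)\bigr)$. Negative trace with positive determinant gives both eigenvalues with negative real part. Negative determinant gives real eigenvalues of opposite sign, i.e.\ a saddle. Zero determinant is exactly the tangency condition of Theorem~\ref{thm:bistability}(i).

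For (iii), the equilibria are the zeros of $\phi(x)=\alpha f(x)-x$. We have $\phi(-\infty)=+\infty$, $\phi(+\infty)=-\infty$, and $\phi(0)>0$, so all zeros are positive and $m^*>0$. The sign of $\phi'=\alpha f'-1$ at a simple zero decides stability by (ii): stable zeros have $\phi'<0$ and saddles have $\phi'>0$. The zeros alternate in type, so an odd count forces the pattern stable/saddle/stable when there are three.

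The count itself comes from Theorem~\ref{thm:bistability}. The convexity argument there shows that $x\mapsto x/f(x)$ (the value of $\alpha$ along the equilibrium curve) has critical points exactly at the roots $z_\pm$. So this map is monotone when $\lambda\theta\le2$. When $\lambda\theta>2$ it is increasing, then decreasing, then increasing, with local max $\alpha_{\mathrm{upper}}$ and local min $\alpha_{\mathrm{lower}}$. Hence there are three equilibria exactly for $\alpha\in(\alpha_{\mathrm{lower}},\alpha_{\mathrm{upper}})$ and one otherwise.

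Global convergence uses the absence of closed orbits from (i) together with Poincaré--Bendixson inside the compact invariant box. Every $\omega$-limit set then consists of equilibria, plus possibly heteroclinic connections among them. In the monostable case there is a single equilibrium, so every trajectory converges to it. In the bistable case a heteroclinic cycle would need a cyclic chain of connections. This is ruled out because the system is cooperative (off-diagonal entries $k_mf'>0$ and $k_p>0$), so orbits are eventually monotone; alternatively, Bendixson's criterion excludes such cycles as well.

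The main obstacle is the separatrix claim. Trajectories not converging to a stable node must converge to the saddle and hence lie on its stable manifold $W^s$, which is a one-dimensional curve. We must show $W^s$ actually separates the quadrant. The plan is to use monotonicity. $W^s$ is an unordered curve with negative slope, since its tangent is the eigenvector of $J$ associated with the negative eigenvalue of a cooperative matrix. Extended backward in time, it stays in the box or exits through $\partial\mathbb{R}^2_{\ge0}$ by the invariance shown in (i), so it cuts the quadrant into two regions. The unstable eigendirection is positive and points into the two regions toward the upper and lower nodes respectively. Order preservation of the flow then assigns each region to one basin. The delicate point is proving that $W^s$ reaches the boundary of the quadrant rather than accumulating inside it. Here we would use that there are no other equilibria and no cycles, so backward orbits leave every compact subset of the open region.
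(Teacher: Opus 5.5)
Your proof is correct, up to one caveat that applies equally to the paper and is noted at the end. It keeps the paper's skeleton: boundary check, Nagumo and Bendixson for (i); trace and determinant for (ii); equilibrium count, alternating sign of $\phi'$ and Poincar\'e--Bendixson for (iii). Where the paper only asserts a step, however, you supply an argument, mostly with monotone-systems tools.

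\emph{Finite-time entry.} Your margin argument proves what the paper's ``similar inequalities give monotone convergence'' only suggests. The orbit is bounded, so $f$ stays in some $[c,1-c]$, so eventually $\delta\le m\le k_m/k_{dm}-\delta$, and this drives $x$ into $[0,\alpha]$ in finite time. The remark about enlarging the box by $\varepsilon$ points the wrong way and can be dropped.

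\emph{Equilibrium count.} Reading the count off $A(x)=x/f(x)$ is more self-contained than citing Theorem~\ref{thm:bistability}. Here $A'=(1-\lambda x(1-f))/f$ has the sign of $e^{z}-z-(\lambda\theta-1)$. Since $\phi'(x^*)=-f(x^*)A'(x^*)$ at equilibria, this also proves that zeros are simple away from $\alpha_{\mathrm{lower}},\alpha_{\mathrm{upper}}$, which the paper's ``slopes alternate'' uses tacitly.

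\emph{Separatrix.} The paper's ``standard hyperbolic-saddle theory'' does not show that $W^s$ separates the quadrant. Your cooperative route can, and eventual monotonicity of bounded orbits makes Poincar\'e--Bendixson unnecessary.

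Two repairs are needed in the separatrix sketch.

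First, unorderedness of $W^s$ must be global; the tangent at the saddle only gives it locally. Suppose $p<q$ both lay on $W^s$, and let $\varphi_t$ be the flow. Kamke's theorem keeps $w(t)=\varphi_t(q)-\varphi_t(p)\ge0$ with $w\neq0$. Near the saddle, $\ell\cdot w$ then grows at a rate close to $\mu_+>0$, where $\ell\gg0$ is the left Perron eigenvector of $J$. This contradicts $w\to0$.

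Second, the backward extension is not controlled by (i), which is forward invariance only. Nor does it follow from ``no other equilibria'', since there are two others. Instead, order the equilibria $e_1\ll e_2\ll e_3$, which holds because $m^*=k_{dp}x^*/k_p$. With $W^s$ unordered, each branch is a monotone arc. An interior accumulation point would be an equilibrium unordered with respect to $e_2$, and there is none. The backward equations then force the branch with increasing $m$ to hit $\{x=0\}$, because large $m$ drives $x$ down. They force the branch with decreasing $m$, where $m<k_m/k_{dm}$, to hit $\{m=0\}$. Both happen in finite backward time. Hence $W^s\cap\mathbb{R}^2_{\ge0}$ is an arc joining the two axes through $e_2$, and it separates the quadrant.

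\emph{Caveat shared with the paper.} At $\lambda\theta=2$, $\alpha=2\theta$, the unique equilibrium has $x^*=\theta$ and $\alpha f'(x^*)=1$, with $\phi(x)=-\tfrac{\lambda^{2}}{12}(x-\theta)^{3}+O\bigl((x-\theta)^{5}\bigr)$. It is non-hyperbolic yet asymptotically stable. So the ``only if'' in (ii) is false at that point; your (ii) rightly proves only the ``if'' directions. In the first bullet of (iii), local stability at this point cannot come from (ii).

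Your cooperative setting repairs this. For $x<\theta$ close to $\theta$ one has $\phi(x)>0$. Any $m$ with $k_{dp}x/k_p<m<k_mf(x)/k_{dm}$ then gives a point $a\ll(m^*,x^*)$ at which both components of the vector field are positive. Symmetrically one obtains $b\gg(m^*,x^*)$ with both components negative. The order interval $[a,b]$ is forward invariant and can be taken arbitrarily small, which gives local stability.
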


\begin{proof}
\emph{(i)} On the boundary $\{m=0\}$, $\dot m = k_m f(x) > 0$; on
$\{m=k_m/k_{dm}\}$, $\dot m = k_m(f(x)-1)\leq 0$.  On $\{x=0\}$,
$\dot x = k_p m \geq 0$; on $\{x=\alpha\}$ and $m\leq k_m/k_{dm}$,
$\dot x = k_p m - k_{dp}\alpha \leq k_p(k_m/k_{dm}) - k_{dp}\alpha = 0$.
By Nagumo's theorem~\citep{blanchini2008set}, $\mathcal{B}_{\mathrm{auto}}$
is forward invariant; outside $\mathcal{B}_{\mathrm{auto}}$ similar
inequalities give monotone convergence into the box.  The divergence is
$\partial \dot m/\partial m + \partial \dot x/\partial x = -k_{dm}-k_{dp}<0$
identically, so Bendixson's criterion rules out closed orbits in any
simply connected region.

\emph{(ii)} The Jacobian at $(m^{*},x^{*})$ is
$J=\bigl(\begin{smallmatrix}-k_{dm}&k_m f'(x^{*})\\k_p&-k_{dp}\end{smallmatrix}\bigr)$,
with $\operatorname{tr}J=-(k_{dm}+k_{dp})<0$ and
$\det J = k_{dm} k_{dp}\bigl(1-\alpha f'(x^{*})\bigr)$.  Stability holds iff
$\det J>0$, i.e.\ iff $\alpha f'(x^{*})<1$; the equilibrium is a saddle iff
$\det J<0$, i.e.\ $\alpha f'(x^{*})>1$.  At a saddle-node, $\det J=0$
gives $\alpha f'(x^{*})=1$, identical to the tangency condition derived
in~\eqref{eq:tangency_logistic}.

\emph{(iii)} For each parameter regime, Theorem~\ref{thm:bistability}
gives the number of fixed points.  In the monostable regimes the unique
equilibrium has $\alpha f'(x^{*})<1$ (since the function $\alpha f$
crosses the diagonal transversally with slope less than $1$ when there
is no second crossing), hence is locally asymptotically stable by~(ii);
combined with forward invariance from~(i) and the Bendixson criterion,
Poincar\'e--Bendixson forces the $\omega$-limit of every trajectory to
be the unique equilibrium, giving global asymptotic stability.  In the
bistable regime, the three roots of $x=\alpha f(x)$ correspond
geometrically to the diagonal crossing $\alpha f$ in slow--fast--slow
order, so the slopes alternate $<1,\,>1,\,<1$, classifying the outer two
as stable and the middle as a saddle.  Bendixson again rules out closed
orbits, so by Poincar\'e--Bendixson every trajectory converges to one of
the equilibria; the basin boundary is the one-dimensional stable
manifold of the saddle by standard hyperbolic-saddle theory.
\end{proof}

Corollary~\ref{cor:autoreg_stability} sharpens Theorem~\ref{thm:bistability}
in two practically important ways. First, it identifies which fixed points
are stable in each parameter regime, so simulations need not infer
stability from initial-condition sweeps. Second, it certifies global
asymptotic stability in the monostable regimes, including the
high-amplification regime $\alpha>\alpha_{\mathrm{upper}}$ where the basal
escape mechanism of Section~\ref{ex:positive_autoregulation} operates: at
$\alpha\approx 600\gg\alpha_{\mathrm{upper}}\approx 2.82$, every trajectory
converges to the unique high-expression equilibrium, so the
$\sim$44~minute escape time is not a metastable transient but the
deterministic approach to a globally attracting state.

The logistic model's key advantage is most pronounced in scenarios with low protein expression.
For the Hill function, the production rate at \(x = 0\) is exactly zero. Consequently, when
protein level reaches zero, mRNA synthesis halts: \(\dot{m}|_{x=0} = -k_{dm}m\), leading to
exponential mRNA decay \(m(t) = m(0)e^{-k_{dm}t}\), and subsequently protein decay
\(x(t) = x(0)e^{-k_{dp}t}\). The system spirals irreversibly toward the absorbing state
\((m,x) = (0,0)\). Escape requires the protein level to reach the unstable separatrix at
\(x^* \approx 0.041\), an event with vanishingly small probability in a noisy low-copy-number
environment~\citep{lipshtat2006genetic}.

The logistic function maintains a non-zero production rate even at \(x = 0\):
\begin{equation}
f(0) = \frac{1}{1 + e^{\lambda\theta}}.
\end{equation}
For \(\lambda = 3\) and \(\theta = 1\):
\(e^{\lambda\theta} = e^3 \approx 20.086\), so
\(f(0) \approx 1/21.086 \approx 0.0474\),
giving a basal mRNA synthesis rate of
\(k_m f(0) \approx (0.003)(0.0474) \approx 0.000142\)~molecules\,s\(^{-1}\).
At quasi-steady state (with protein held at zero), the basal mRNA level is:
\begin{equation}
m_{\text{basal}}
= \frac{k_m f(0)}{k_{dm}}
= \frac{0.000142}{0.001}
\approx 0.142\ \text{molecules}.
\end{equation}
While fractional molecule numbers are non-physical deterministically, they represent a
time-averaged stochastic occupancy of approximately 14\% of the time with one mRNA molecule
present.

Once the basal mRNA pool reaches its quasi-steady-state value $m_{\mathrm{basal}}$
(on the timescale $1/k_{dm} \approx 16.7$~min), the resulting
protein accumulation rate is
\begin{equation}
\begin{aligned}
\dot{x}\big|_{\mathrm{qss}}
&\;\approx\; k_p\,m_{\mathrm{basal}} - k_{dp}\,x
\;\approx\; (0.002)(0.142) - (0.00001)(0.01)\\
&\;\approx\; 2.84 \times 10^{-4}~\text{molecules\,s}^{-1}.
\end{aligned}
\end{equation}
(The instantaneous initial rate $\dot{x}(0) = k_p m(0) - k_{dp} x(0) \approx 2 \times 10^{-5}$~molecules s$^{-1}$
is roughly $14\times$ smaller, because $m(0)=0.01 \ll m_{\mathrm{basal}}=0.142$;
the quasi-steady-state rate above governs the dynamics once the mRNA pool has
equilibrated.)
To reach \(x \approx 1\) from \(x \approx 0.01\) requires accumulating approximately 0.99
protein units, giving a crude linear estimate:
\begin{equation}
t_{\mathrm{escape}}
\sim \frac{0.99}{2.84 \times 10^{-4}}
\approx 3{,}480\ \mathrm{s}
\approx 58\ \mathrm{minutes}.
\end{equation}
Numerical simulation shows escape in approximately 2650~seconds (\(\sim\)44~minutes),
somewhat faster than this estimate. The discrepancy arises because the crude estimate treats
the accumulation rate as constant, whereas in reality the positive feedback accelerates protein
production as \(x\) approaches \(\theta = 1\), shortening the escape time. The
order-of-magnitude agreement confirms the dominant role of basal production in driving escape.

This escape mechanism mirrors observed behaviour in the \textit{gal} operon, where leaky basal
expression prevents complete transcriptional shutdown during nutrient shifts, enabling rapid
induction upon re-exposure to galactose~\citep{hua1974multiple,weickert1993galactose}.
Experimental studies have shown that small leaky production, on the order of 1 to 5\% of
maximal expression, is sufficient to maintain responsiveness in bistable systems subjected to
environmental fluctuations~\citep{weickert1993galactose}.

To rigorously test the theoretical predictions, we numerically integrate
Equations~\eqref{eq:autoreg_logistic} and \eqref{eq:autoreg_hill} over the time interval
\([0,\,10{,}000]\)~seconds, starting from \(m(0) = 0.01\) and \(x(0) = 0.01\).
Figure~\ref{fig:autoreg_sim} presents the protein trajectories \(x(t)\) for both models.

For the \textbf{logistic model}: the protein level initially remains very low as basal mRNA
production slowly accumulates. At approximately \(t \approx 2650\)~seconds (\(\sim\)44~minutes),
the protein level surpasses the activation threshold \(\theta = 1\), triggering strong positive
feedback. The system rapidly accelerates, reaching \(x \approx 38\) at \(t = 10{,}000\)~s and
approaching asymptotically the high steady state \(x_{\mathrm{ss}} \approx \alpha = 600\),
consistent with monostability at the high state for \(\alpha \gg \alpha_{\mathrm{upper}}\)
(see Theorem~\ref{thm:bistability}; here $\alpha_{\mathrm{upper}}\approx 2.82$ for $\lambda=3,\theta=1$).

For the \textbf{Hill model}: mRNA synthesis is negligible throughout the simulation because
the Hill activation at the initial protein level satisfies
\begin{equation}
h(0.01) = \frac{(0.01)^3}{(0.01)^3 + 1}
= \frac{10^{-6}}{1 + 10^{-6}}
\approx 1 \times 10^{-6},
\end{equation}
making the mRNA production rate \(k_m h(x) \approx 3 \times 10^{-9}\)~molecules\,s\(^{-1}\) more than three orders
of magnitude smaller than the mRNA degradation rate \(k_{dm}\,m(0) = 10^{-5}\)~molecules\,s\(^{-1}\).
Consequently, mRNA decays exponentially as \(m(t) \approx m(0)\,e^{-k_{dm}t}\), and the
protein satisfies the resulting linear ODE \(\dot{x} = k_p m(t) - k_{dp} x\) with explicit
solution
\begin{equation}
x(t) = x(0)\,e^{-k_{dp}t}
+ \frac{k_p\,m(0)}{k_{dm} - k_{dp}}
\!\left(e^{-k_{dp}t} - e^{-k_{dm}t}\right).
\label{eq:hill_explicit}
\end{equation}
Substituting the parameter values gives
\(x(t) \approx 0.0302\,e^{-10^{-5}t} - 0.0202\,e^{-0.001t}\),
which \emph{rises} from \(x(0) = 0.01\) to a transient peak of
\(x_{\mathrm{peak}} \approx 0.029\) at \(t \approx 4{,}245\)~s (\(\approx 71\)~min)
according to this linear analytical approximation. The actual nonlinear simulation
places the peak slightly later (near $t \approx 4{,}900$~s) due to the small but
non-zero Hill production at $x \neq 0$, but the peak height $x_{\mathrm{peak}} \approx 0.029$
is essentially identical. After this peak, the protein decays negligibly at the
very slow rate \(k_{dp} = 10^{-5}\)~s\(^{-1}\) (half-life \(\approx 19\) hours); at
\(t = 10{,}000\)~s the protein level is \(x \approx 0.028\). This transient peak is
\emph{not} a fixed point: one can verify directly that
\(600\,h(0.028) \approx 0.013 \neq 0.028\), confirming that no balance between production
and degradation exists at this value. Crucially, the trajectory remains at all times below
the unstable separatrix at
\begin{equation}
x^* \approx \frac{1}{\sqrt{600}} \approx 0.041,
\end{equation}
obtained from \(x^* = 600\,h(x^*)\) which for small \(x^*\) approximates as \(x^* \approx 600\,(x^*)^3\) (since \(h(x^*)\approx (x^*)^3\) when \((x^*)^3\ll 1\)), giving \((x^*)^2 \approx 1/600\).
Since the trajectory never reaches this separatrix and no basal mRNA production exists to
sustain protein accumulation, no escape is possible through intrinsic dynamics alone---a
statistically improbable event in a low-copy-number cellular environment.

\begin{figure}[t]
\centering
\includegraphics[height=6cm, width=8cm]{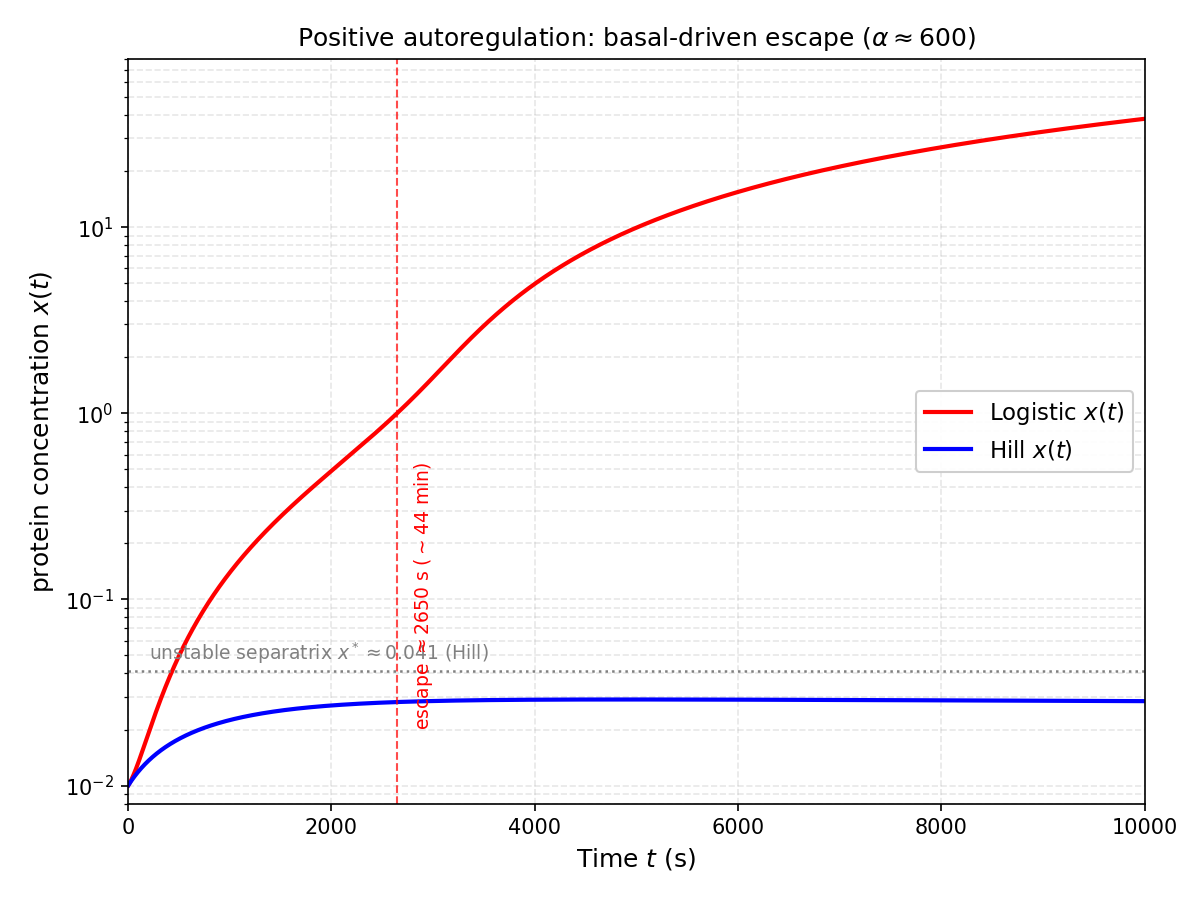}
\caption{Protein dynamics in positive autoregulation under low initial conditions
(\(m(0) = 0.01\), \(x(0) = 0.01\)) with feedback amplification \(\alpha \approx 600\).
\textbf{Logistic model (monostable-high regime):} The system escapes the off-state in
approximately 2650~s (\(\sim\)44~min) due to basal production
\(k_m f(0) \approx 0.000142\)~molecules\,s\(^{-1}\), reaching \(x \approx 38\) at 10{,}000~s and
approaching \(x_{\mathrm{ss}} \approx 600\) asymptotically.
\textbf{Hill model (bistable regime):} mRNA decays exponentially and protein reaches a
transient peak near \(x \approx 0.029\) (analytically predicted at \(t \approx 4{,}245\)~s) before decaying
negligibly; at \(t = 10{,}000\)~s the protein remains near \(x \approx 0.028\), well
below the unstable separatrix at \(x^* \approx 0.041\). Because \(h(x) \approx x^3 \approx 0\)
throughout, no basal production exists to drive escape. This illustrates the critical role
of basal expression in maintaining cellular responsiveness.}
\label{fig:autoreg_sim}
\end{figure}

While the simulations employ \(k_{dp} = 10^{-5}\)~s\(^{-1}\), representing the upper range of
transcription factor stability, the qualitative distinction between models persists across
physiologically plausible degradation rates. This demonstrates that the basal expression
mechanism is not an artifact of extreme parameter choices but a fundamental feature of logistic
formulations that accurately captures regulatory dynamics in the moderate-stability regime
characteristic of autoregulatory transcription factors~\citep{belle2006quantification,
weickert1993galactose}.

The simulation outcomes align with experimental observations across several domains. Becskei and Serrano~\cite{becskei2000engineering} constructed positive feedback loops in \textit{E.~coli} and observed that basal leakage prevents trapping in off-states and stabilises expression under noise; circuits without leakage exhibited hysteresis and irreversible commitment to low-expression states. The \textit{lac} operon exhibits basal expression of approximately $1$--$2\%$ of maximal that enables rapid induction upon lactose exposure even after prolonged growth in glucose~\citep{setty2003detailed,ozbudak2004multistability}, and the \textit{gal} operon similarly maintains low-level expression preventing transcriptional shutdown~\citep{weickert1993galactose}. Single-cell studies directly visualised stochastic transitions between expression states in bistable circuits on timescales of minutes to hours~\citep{ozbudak2004multistability}, consistent with our $\sim$44-minute simulation. Hill-type models without explicit leakage fail to predict dynamics in circuits with engineered rapid degradation (ssrA tags), requiring ad~hoc basal terms~\citep{becskei2000engineering}; the logistic model incorporates this naturally.

Positive autoregulation exemplifies the rich interplay between feedback architecture, molecular
noise, and expression dynamics. Hill functions excel at describing cooperative switches but fail
at low expression due to zero basal rate, trapping systems in off-states. Logistic functions,
with inherent non-zero basal production, naturally capture noise-driven escape without ad hoc
parameters. Our analysis, grounded in experimentally derived \textit{E.~coli} parameters and
validated through numerical simulation, demonstrates that basal expression is a fundamental
feature enabling cellular responsiveness: the logistic model's escape in \(\sim\)44~minutes,
driven solely by basal production of \(\sim 0.000142\)~molecules\,s\(^{-1}\), aligns quantitatively with
observations in the \textit{gal} operon and related systems.

\subsection{Basal Expression in a Boolean-Derived Network}
\label{subsec:minimal_boolean}

The preceding two examples examined, in a single autoregulating gene and in a
two-gene oscillator, how the vanishing of the Hill kernel at zero input removes
basal expression and drives the model toward shutdown. We close this section
with a small \emph{Boolean-derived} network, which exhibits the same
basal-expression contrast at the level of a network rather than a single
motif, in a system small enough to be displayed in full and checked in closed
form. It also serves as a transparent preview of the $80$-gene Boolean-derived
experiment of Section~\ref{sec:numerical_comparison}.

Consider six genes $g_1,\dots,g_6$ governed by the Boolean update rules
\begin{equation}
\begin{aligned}
g_1 &\leftarrow \textsc{False}, & \qquad g_4 &\leftarrow \lnot g_3,\\
g_2 &\leftarrow g_1,            & \qquad g_5 &\leftarrow g_3 \vee g_6,\\
g_3 &\leftarrow \lnot g_4,      & \qquad g_6 &\leftarrow g_3 \wedge \lnot g_2.
\end{aligned}
\label{eq:boolean_motif_rules}
\end{equation}
These rules exercise each elementary construct of the De~Morgan translation of
Section~\ref{sec:prelim}: a constant ($g_1$), a single positive literal
($g_2$), a pair of mutually repressing genes forming a sub-toggle
($g_3,g_4$), a disjunction ($g_5$), and a conjunction of a positive and a
negative literal ($g_6$). Sending each literal to a kernel, each conjunction to
a product, and the disjunction through the De~Morgan product
formula~\eqref{eq:demorgan_0} gives the regulatory functions
\begin{equation}
\begin{aligned}
\Phi_1 &= 0, & \qquad \Phi_4 &= s^-(x_3),\\
\Phi_2 &= s^+(x_1), & \qquad \Phi_5 &= 1-\bigl(1-s^+(x_3)\bigr)\bigl(1-s^+(x_6)\bigr),\\
\Phi_3 &= s^-(x_4), & \qquad \Phi_6 &= s^+(x_3)\,s^-(x_2),
\end{aligned}
\label{eq:boolean_motif_phi}
\end{equation}
where $s^\pm$ denotes the chosen kernel pair: the logistic kernels $f^\pm$, or
the Hill kernels $h^\pm$ evaluated at the domain-guarded argument
$\max(x_j,0)$. Guarding the Hill argument---the patch examined in
Section~\ref{sec:general_failure}---keeps the kernel real-valued under any
numerical excursion, so that the comparison below isolates the \emph{modelling}
behaviour of the two kernels from the numerical pathologies of
Section~\ref{sec:hill_pathologies_empirical}. Each gene then obeys
$\dot x_i = \kappa_i\Phi_i(\mathbf{x})-\gamma_i x_i$ as in~\eqref{eq:ode_0}; we
integrate from the initial state $\mathbf{x}(0)=(0.8,0.6,0.9,0.2,0.5,0.4)$ with
$\kappa=(1,1,1.2,1.2,1,0.9)$, $\gamma=(1,\dots,1)$, threshold $\theta=0.5$, and
the non-integer exponent $n=3.50918$, the logistic
steepness matched by $\lambda=n/\theta$.

The decisive gene is $g_2$. Its sole activator is $g_1$, whose Boolean rule is
the constant \textsc{False}; hence $\Phi_1=0$ and $g_1$ decays exponentially,
$x_1(t)=x_1(0)e^{-\gamma_1 t}\to0$. The steady level of $g_2$ is then governed
entirely by the value of its kernel at zero input. Under the Hill kernel
$h^+(0;\theta,n)=0$, so once $x_1$ has decayed the production term
$\kappa_2 h^+(x_1)$ vanishes and $g_2$ converges to the steady state
\[
  x_2^{\ast}\big|_{\mathrm{Hill}}
  \;=\; \frac{\kappa_2\,h^+(0)}{\gamma_2} \;=\; 0,
\]
exactly zero, with no basal expression whatsoever. Under the logistic kernel
the basal activation is strictly positive,
\begin{equation}
f^+(0;\theta,\lambda)=\frac{1}{1+e^{\lambda\theta}}=\frac{1}{1+e^{n}}\approx0.029052,
\end{equation}
the value depending only on $n$, since the matching $\lambda=n/\theta$ makes
$\lambda\theta=n$. Gene $g_2$ therefore converges to the strictly positive
basal steady state
\begin{equation}
x_2^{\ast}\big|_{\mathrm{logistic}}=\frac{\kappa_2\,f^+(0)}{\gamma_2}
        =\frac{1\cdot0.029052}{1}=0.029052,
\label{eq:boolean_motif_basal}
\end{equation}
in exact agreement with the simulated value. The biological reading is the one
developed throughout this section: promoters are never completely silent, so
the logistic level $x_2^{\ast}\approx0.029$---about $3\%$ of the gene's maximal
expression $\kappa_2/\gamma_2$---is a faithful basal level, whereas the Hill
prediction of identically zero expression is not. Because $x_2$ also enters the
production rule of $g_6$ through the literal $\lnot g_2$, the two models differ
slightly on the downstream genes as well; but the cleanest and exactly
predictable contrast is the one on $g_2$ itself.

Integrating to $t=200$, the horizon of the $80$-gene experiment, and continuing
to $t=2000$ shows that the contrast is permanent rather than transient. The
genes $g_3,\dots,g_6$ settle to bounded steady states within $t\approx15$;
thereafter the logistic value $x_2=0.029052$ holds unchanged, while the Hill
value of $g_2$ decays until it is numerically indistinguishable from zero. The
Hill model does not place $g_2$ at a small basal level---it sends $g_2$ to its
true steady state of zero, so the separation between the two models on $g_2$
does not close. Figure~\ref{fig:boolean_motif} shows the two integrations and
the $g_2$ off-state. Both steady states---$0$ for the Hill model and
$\kappa_2 f^+(0)/\gamma_2$ for the logistic---are equilibria of the respective
vector fields, so the contrast is a property of the models themselves,
reproduced across independent re-implementations and solver tolerances, and not
an artefact of the integrator.

This example isolates one of the two faces of the Hill kernel's degeneracy at
the boundary $x=0$ of the non-negative orthant. The face seen here is a
\emph{modelling} one: because $h^+(0)=0$, every gene whose activators are all
inactive is assigned production exactly zero---the same loss of basal
expression that, in the bistable autoregulation circuit of
Section~\ref{ex:positive_autoregulation}, produced the absorbing off-state. The
other face is \emph{numerical} and is the subject of
Section~\ref{sec:numerical_comparison}: at and just outside that same boundary
the Hill field forfeits its smoothness and, for negative arguments, its
real-valuedness. The two are not independent defects but consequences of the
single fact that the Hill kernel degenerates at $x=0$; the domain guard adopted
above suppresses the numerical face, and the persistence of the zero steady
state of $g_2$ shows that it leaves the modelling face untouched. The example
is instructive precisely because the Hill failure here is entirely quiet: the
trajectories are smooth and bounded and the integration reports no difficulty,
yet the model has converged to a biologically impossible state---a Hill
simulation that ``looks correct'' can still be wrong. The logistic steady
state, by contrast, is available in closed form,
Equation~\eqref{eq:boolean_motif_basal}, giving an exact analytic check that an
$80$-gene black-box integration cannot provide.

\begin{figure}[t]
  \centering
  \includegraphics[width=\textwidth]{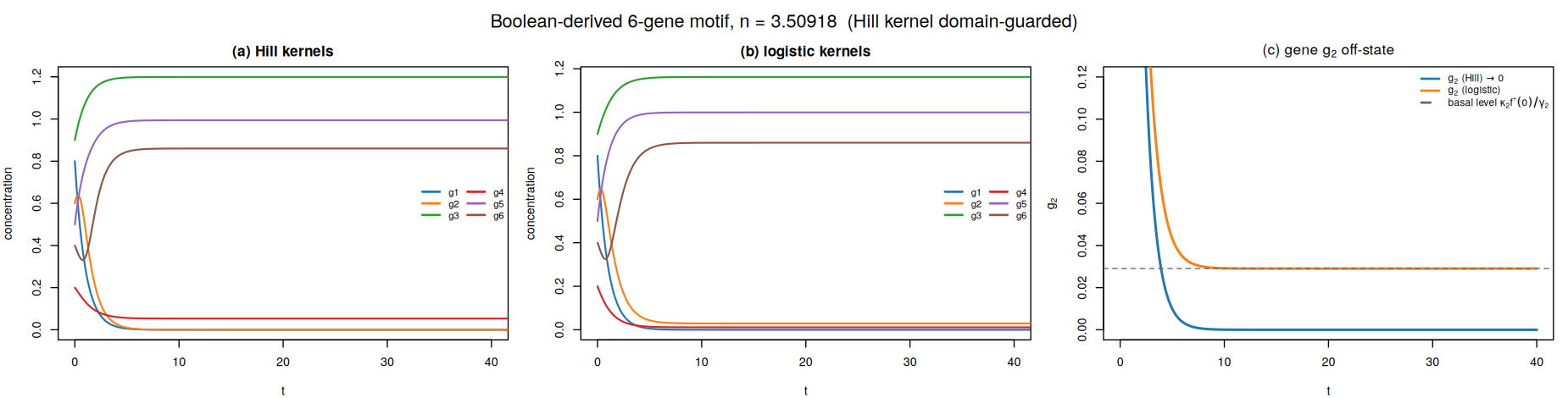}
  \caption{%
    Minimal Boolean-derived $6$-gene motif of
    Equation~\eqref{eq:boolean_motif_rules}, integrated under the Hill and
    logistic kernels with the non-integer exponent $n=3.50918$ of the
    $80$-gene experiment; the Hill kernel is evaluated at the domain-guarded
    argument $\max(x,0)$. \textbf{(a)},~\textbf{(b)}~Trajectories of all six
    genes over $t\in[0,40]$ under the Hill and the logistic kernels; the genes
    $g_3,\dots,g_6$ settle by $t\approx15$. \textbf{(c)}~The off-state of gene
    $g_2$, whose sole activator $g_1$ is constitutively off, with the vertical
    axis restricted to $[0,0.12]$ to resolve the two steady levels (both genes
    start at $x_2=0.6$, above the panel). Under the Hill kernel $g_2$ converges
    to its steady state of exactly zero; under the logistic kernel it converges
    to the basal level $\kappa_2 f^+(0)/\gamma_2=0.029052$ (grey dashed line),
    the closed-form prediction of Equation~\eqref{eq:boolean_motif_basal}.
    Parameters: $\kappa=(1,1,1.2,1.2,1,0.9)$, $\gamma=(1,\dots,1)$,
    $\theta=0.5$, $\lambda=n/\theta$,
    $\mathbf{x}(0)=(0.8,0.6,0.9,0.2,0.5,0.4)$.%
  }
  \label{fig:boolean_motif}
\end{figure}

\section{Numerical Integration of Boolean-Derived ODE Systems: Hill Functions with Real Exponent versus Logistic Functions}
\label{sec:numerical_comparison}

To compare the numerical behaviour of Hill-based and logistic-based ODE systems
at scale, we designed a simulation protocol implemented in \textit{Mathematica}
that proceeds through four successive stages:
(i)~construction of a Boolean regulatory network;
(ii)~automatic translation of each Boolean update rule into a continuous ODE;
(iii)~numerical integration of the resulting high-dimensional system; and
(iv)~extraction of state snapshots at prescribed observation times.
The same Boolean network and the same parameter sets are used in both the
Hill-function (notebook~\texttt{S1}) and the logistic-function
(notebook~\texttt{S2}) experiments, so that any difference in the simulated
trajectories is attributable solely to the choice of regulatory function.

\subsection{Experimental Protocol}
\label{subsec:protocol}

\subsubsection{Boolean Network Construction}
The Boolean network $\mathcal{F}$ consists of $N = 80$ variables
$\mathbf{x} = (x_1, x_2, \ldots, x_{80})$, each governed by a propositional
update rule of the form
\[
  x_i \;\leftarrow\; \varphi_i(x_1,\ldots,x_{80}),
  \qquad i = 1,\ldots,80,
\]
where each $\varphi_i$ is a Boolean formula over the variables and their
negations, following the Boolean network formalism introduced by
Kauffman~\cite{kauffman1969metabolic} and applied to gene regulatory systems by
Albert and Othmer~\cite{albert2003topology}. The network exhibits a wide
spectrum of regulatory complexity. At one extreme, seven variables are assigned
constant rules: four are fixed to \textsc{False}
($x_{24}, x_{37}, x_{53}, x_{77}$), meaning they receive no production input
and undergo pure exponential decay, and three are fixed to \textsc{True}
($x_{61}, x_{67}, x_{76}$), meaning they are constitutively expressed at their
maximal rate $\kappa_i$. At the other extreme, variables such as $x_1$, $x_3$,
and $x_4$ are regulated by 15, 21, and 22 distinct conjunctive clauses
respectively, each clause encoding a specific combination of activating and
repressing signals. Between these extremes lie simple two-literal rules such as
$x_2 \leftarrow \lnot x_3 \wedge \lnot x_{42}$ or
$x_7 \leftarrow \lnot x_{17} \wedge \lnot x_{35}$, which represent
straightforward mutual repression gates.

Prior to conversion, each formula $\varphi_i$ is reduced to a canonical minimal
disjunctive normal form (DNF) using \texttt{BooleanMinimize} in \textsc{DNF}
mode. This preprocessing step eliminates redundant and duplicate conjunctive
clauses, such as absorbed terms of the form $C \vee (C \wedge D) \to C$, or
tautologies $C \vee \lnot C \to \textsc{True}$, and propagates constants arising
from the seven fixed variables. Without this step, duplicate clauses surviving
into the ODE translation would introduce artificial integer coefficients
($2\times$, $3\times$, \ldots) in the production terms, inflating $\Phi_i$
beyond its biologically meaningful $[0,1]$ bound.

\subsubsection{Translation to ODEs}
The minimised Boolean network is converted automatically into a continuous ODE system following the De~Morgan formalism developed in the companion paper~\citep{belgacem2026framework} and recalled in Section~\ref{sec:prelim}: each positive literal $x_j$ becomes $h^+(x_j)$ or $f^+(x_j)$; each negative literal $\lnot x_j$ becomes $h^-(x_j) = 1 - h^+(x_j)$ or $f^-(x_j) = 1 - f^+(x_j)$; each conjunction is mapped to the product of the corresponding sigmoidal terms; and each disjunction is mapped via the recursive De~Morgan product formula~\eqref{eq:demorgan_0}. The explicit deployment of this formula for $m$-clause disjunctions within logistic-based ODE systems, combined with \texttt{BooleanMinimize} preprocessing, constitutes, to the best of our knowledge, a contribution not previously made explicit in the gene-regulatory-network modelling literature. By contrast, the weighted-sum formulation of Samuilik \emph{et al.}~\citep{samuilik2022mathematical} also preserves the $[0,1]$ bound but does so by collapsing all regulatory logic into a shared threshold and a single sigmoid, thereby preventing regulator-specific tuning and obscuring the distinction between AND and OR combinatorial gates (see the companion paper~\citep{belgacem2026framework} for a detailed comparison). For example, the 15-clause disjunctive rule for $x_1$ yields a product of 15 complementary terms via~\eqref{eq:demorgan_0}, while the single-clause rule for $x_2$ reduces directly to the product $f^-(x_3)\,f^-(x_{42})$.

The resulting ODE for variable $x_i$ takes the form~\eqref{eq:ode_0}, with $\Phi_i(\mathbf{x}) \in [0,1]$ the continuous approximation of $\varphi_i$. Variables fixed to \textsc{False} satisfy $\Phi_i \equiv 0$, giving $\dot{x}_i = -\gamma_i x_i$ (pure exponential decay to zero); those fixed to \textsc{True} satisfy $\Phi_i \equiv 1$ and converge monotonically to $x_i^* = \kappa_i/\gamma_i$. By construction, all variables satisfy $x_i(t) \leq \kappa_i/\gamma_i$ for all $t \geq 0$.

\subsubsection{Parameter Selection}
Each of the five parameter objects---production rates $\boldsymbol{\kappa}$,
degradation rates $\boldsymbol{\gamma}$, activation/repression thresholds
$\boldsymbol{\theta}$, the shared cooperativity coefficient $n$, and the initial
condition vector $\mathbf{x}(0)$---is first
drawn independently from a uniform distribution. This constitutes the
\emph{exploratory} draw intended to verify that the pipeline produces
biologically plausible ranges. To ensure that any difference between the Hill-based and logistic-based
simulations is caused solely by the choice of regulatory function and not by
parameter randomness, all five objects are subsequently \emph{overwritten} with
a single fixed realisation generated by the same random procedure in a prior
session. This two-phase design---first a random draw to explore the feasible
space, then a fixed assignment shared across experiments---is standard practice
in computational systems biology: it guarantees reproducibility while
demonstrating that the parameter values are not hand-tuned but drawn from
biologically motivated distributions.

The fixed value $n = 3.50918$ is particularly important: it lies strictly in
the open interval $(3, 4)$ and is therefore \emph{non-integer}. As detailed in the
companion paper~\citep{belgacem2026framework} and recalled below in Section~\ref{sec:hill_pathologies_empirical}, this is precisely the regime that exposes the
numerical pathologies of the Hill function. Table~\ref{tab:params} summarises
the distributions and the resulting fixed values.

\begin{table}[!htbp]
  \centering
  \caption{Parameter distributions and fixed values used in both ODE
           experiments. All 80-dimensional vectors share the same realisation
           across the Hill and logistic notebooks.}
  \label{tab:params}
  \begin{tabular}{llll}
    \hline
    Parameter & Symbol & Draw distribution & Range of fixed values \\
    \hline
    Production rate   & $\kappa_i$ & $\mathrm{U}(50,100)$  & $[53.2,\; 99.6]$   \\
    Degradation rate  & $\gamma_i$ & $\mathrm{U}(0.25,2)$  & $[0.28,\; 1.98]$   \\
    Threshold         & $\theta_i$ & $\mathrm{U}(10,20)$   & $[10.4,\; 19.8]$   \\
    Cooperativity     & $n$        & $\mathrm{U}(1,5)$     & $3.50918$ (fixed)  \\
    Initial condition & $x_i(0)$  & $\mathrm{U}(0,100)$   & $[0.47,\; 99.3]$   \\
    \hline
  \end{tabular}
\end{table}

\subsubsection{Numerical Integration and State Extraction}
The ODE system~\eqref{eq:ode_0} is integrated from $t = 0$ to $t = 200$ using \texttt{NDSolve} with the fixed initial conditions. Both the Hill and logistic versions call \texttt{NDSolve} with identical settings (default adaptive step-size control, default error tolerances); no solver-specific tuning is applied, so that any difference between the two simulations is attributable to the regulatory kernel alone, all other factors held fixed. We evaluate the numerical solution at the observation times $t^* \in \{10, 30, 50, 100, 150\}$, returning the association $\{x_i \mapsto x_i(t^*)\}_{i=1}^{80}$ for downstream analysis (e.g.\ attractor identification by fixed-point comparison).

\subsection{Definition of the Regulatory Functions}
\label{subsec:reg_functions_def}
Both experiments use the same functional signatures; only the kernel is changed.

\subsubsection{Hill Functions}

The standard Hill activation and repression functions with threshold $\theta$
and (real-valued) cooperativity exponent $n$ are
\begin{equation}
  h^+(x;\,\theta,n) \;=\; \frac{x^n}{\theta^n + x^n},
  \qquad
  h^-(x;\,\theta,n) \;=\; \frac{\theta^n}{\theta^n + x^n}.
  \label{eq:hill}
\end{equation}

\subsubsection{Logistic Functions}

With the steepness matching $\lambda = n/\theta$, the logistic activation and
repression functions are
\begin{equation}
  f^+(x;\,\theta,n) \;=\; \frac{1}{1+e^{-(n/\theta)(x-\theta)}},
  \qquad
  f^-(x;\,\theta,n) \;=\; \frac{1}{1+e^{+(n/\theta)(x-\theta)}}.
  \label{eq:logistic}
\end{equation}

\subsection{Empirical Confirmation of Hill-Function Pathologies}
\label{sec:hill_pathologies_empirical}

The theoretical sources of numerical instability in Hill-function ODE systems
were identified in the companion paper~\citep{belgacem2026framework}. We recall the three
interacting mechanisms for convenience, then show that each one is directly
observable in the output of the 80-variable simulation.

\subsubsection{Mechanism 1: Complex Arithmetic for Non-Integer $n$ and $x < 0$}
ODE solvers routinely produce small negative values for concentrations during
integration steps. For non-integer $n \notin \mathbb{N}$, the expression $x^n$
is undefined over the reals when $x < 0$; using the principal branch of the
complex power gives $x^n = |x|^n e^{i\pi n} \in \mathbb{C}$, immediately
corrupting the entire right-hand side of the ODE system.  Quantitatively,
the imaginary part of $x^n$ for $x<0$ equals $|x|^n\sin(\pi n)$, with
$|\sin(\pi n)|$ bounded away from zero for any non-integer $n$ and reaching
its maximum $1$ at half-integer values; for the value $n=3.50918$ used in
the present experiment, $\sin(\pi n)\approx -0.9996$, so even a rounding-level overshoot
$|x|\sim 10^{-15}$ gives $x^{n}$ an imaginary part of magnitude
$\sim 10^{-53}$, which then propagates through $h^{\pm}(x;\theta,n)$ and the
right-hand side and contaminates every subsequent integration step.  (For
odd-integer $n$ the surrogate is real-valued but $x^n<0$ when $x<0$, so
$h^+(x;\theta,n)<0$, which is biologically inadmissible for a normalised
activation probability.  This case is not encountered in the present
experiment, but applies to any Hill-based simulation that adopts an integer
cooperativity to evade the fractional-power issue.)
The complex-arithmetic pathway is therefore active here,
as confirmed by the solver diagnostic emitted at the first integration attempt:

\smallskip
\begin{footnotesize}
\begin{verbatim}
NDSolve::ndsz: The function value
  {(-1.31378 + 0.I) + 67.5551
     hillm[0.360847 + 0.I, 11.29, 3.50918]
     hillm[52.2162 + 0.I, 17.09, 3.50918], 49, 30}
is not a list of numbers with dimensions {80} at
{t, x2[t], x1[t], x3[t], ...} =
{52.6436, 2.66594+0.I, 108.365+0.I, 0.360847+0.I,
 5.7462e-15 - 2.0572e-69 I, ...}
\end{verbatim}
\end{footnotesize}
\smallskip

\noindent
Three observations from this output are critical.

\emph{First}, the warning fires at $t \approx 52.64$, the moment at which
\texttt{NDSolve}'s function-value check first returns a non-real result: the
production term for $x_2$ has become complex because $x_3$ is being carried as
a complex object ($x_3 = 0.360847 + 0 \cdot i$, i.e.\ real part $0.360847$,
zero imaginary part, but typed as a complex number).

\emph{Second}, the state vector at $t \approx 52.64$ already shows
$x_4 = 5.7 \times 10^{-15} - 2.1 \times 10^{-69}\,i$. The imaginary part
is vanishingly small ($\sim 10^{-69}$) but strictly nonzero, proving that
$x_4$ crossed zero and entered complex arithmetic silently at some earlier time.
The \texttt{ndsz} warning is therefore not the \emph{onset} of contamination;
it is merely the first moment at which imaginary parts grow large enough to
render the function value detectably non-real.

\emph{Third}, this exact warning---same state, same values---is emitted in
\emph{every single \texttt{NDSolve} call} in the notebook: in \texttt{In[1293]}
(the main integration), in \texttt{In[1294]} and \texttt{In[1296]} (the two
plotting calls), and in each of the five \texttt{ExtractExperience} calls
\texttt{In[1298]--[1302]}. Because \texttt{ExtractExperience} calls
\texttt{NDSolve} afresh for each extraction rather than reusing a stored
solution, the solver traverses the same corrupted path every time, and the
warning is reproducibly triggered at $t \approx 52.64$ in all seven independent
runs.

\medskip
\noindent\textbf{A smooth plot does not imply a correct solution.}
The most misleading aspect of the Hill simulation is that the trajectories in
Figure~\ref{fig:hill_zoom} look visually plausible for $t \in [0, 52]$: the
curves are smooth, bounded, and qualitatively similar to sigmoid transients.
This visual appearance cannot be used as evidence of correctness.

From the moment $x_4$ first overshot zero, the right-hand side
$\mathbf{F}(\mathbf{x}(t))$ became a function of complex-valued inputs. The
\texttt{hillm} and \texttt{hillp} functions returned complex outputs whose real
parts were silently treated as real concentrations and carried forward into the
next integration step. The solver's adaptive step-size controller saw a smooth,
well-behaved function---because complex arithmetic is smooth---and raised no
alarm. The result is a set of perfectly smooth curves that faithfully solve a
\emph{complex-corrupted surrogate system}, not the biological ODE. The
corruption is invisible to the eye and detectable only by reading the solver's
message log. This distinguishes the Hill failure fundamentally from ordinary
numerical instability such as step-size blow-up or Runge--Kutta divergence,
where the plot shows visible artefacts---oscillations, spikes, or
explosions---that alert the modeller. Here the plot shows smooth, plausible
dynamics right up to the moment of final collapse at $t \approx 63$--$65$.

\subsubsection{Mechanism 2: Loss of Smoothness and Domain Truncation}
For $n \in (k, k+1)$, the $(k{+}1)$-th derivative of $h^+$ diverges at
$x = 0$, so the ODE vector field is only $C^k$. In our experiment $k = 3$ and
the fourth derivative is singular. Adaptive solvers detect this through
inflating higher-order error estimates and respond by reducing the step size.
As demonstrated above, \texttt{NDSolve} enters complex arithmetic at
$t \approx 52.64$ but does \emph{not} immediately halt: the warning is a
diagnostic, not a stopping condition, and the solver continues building the
\texttt{InterpolatingFunction} until complex arithmetic fully overwhelms
step-size control at approximately $t \approx 63$--$65$, as visible in
Figure~\ref{fig:hill_zoom}. Querying the returned object outside its domain
produces silent polynomial extrapolation:

\smallskip
\begin{footnotesize}
\begin{verbatim}
InterpolatingFunction::dmval:
  Input value {100} lies outside the range of data in the
  interpolating function. Extrapolation will be used.
\end{verbatim}
\end{footnotesize}
\smallskip

\noindent
This \texttt{dmval} warning appears at the extraction queries $t^* = 100$ and
$t^* = 150$, but not at $t^* = 10$, $30$, or $50$, indicating that the
solver's domain extends to at least $t = 50$ but terminates before $t = 100$.
Crucially, the absence of a \texttt{dmval} warning does \emph{not} imply that
the returned values are reliable solutions to the true ODE system. The
\texttt{ndsz} complex-arithmetic warning was emitted \emph{during the
integration itself}, not during post-hoc querying: the piecewise polynomial
stored in the \texttt{InterpolatingFunction} is a faithful record of a
\emph{corrupted} numerical trajectory. Querying it at $t^* = 10$, $30$, or $50$
faithfully interpolates the corrupted track; the returned values are
self-consistent arithmetic but are not solutions to the biological model.

Two failure strata therefore coexist in the Hill experiment. The first is \emph{trajectory corruption}, which affects all five extractions $t^* \in \{10, 30, 50, 100, 150\}$: the numerical path is contaminated by complex arithmetic from the first step in which any concentration overshot zero, so that none of the five extracted values corresponds to a solution of the true ODE system. Concrete evidence is visible even at the earliest times: at $t^* = 10$, $x_4 \approx 1.45 \times 10^{-7}$ (having started from $x_4(0) = 4.58$); at $t^* = 30$ and $t^* = 50$, $x_3 = 0.360847$ exactly, matching the argument in the \texttt{ndsz} warning, which is not a coincidence but a reflection of $x_3$ having converged to its corrupted steady state along the contaminated path. Several other variables exhibit values at $t^* = 10$ (for example $x_{19} \approx 259$, $x_{35} \approx 266$, $x_{51} \approx 208$) that are wildly inconsistent with the corresponding logistic trajectories. The second stratum is \emph{additional domain extrapolation}, affecting only the queries at $t^* = 100$ and $t^* = 150$: beyond trajectory corruption, these two queries lie outside the \texttt{InterpolatingFunction} domain (which ends at approximately $t \approx 63$--$65$) and therefore invoke unconstrained polynomial extrapolation. At $t^* = 100$ this already produces large unphysical values such as $x_5(100) \approx -186$, $x_{12}(100) \approx -10{,}461$, $x_{19}(100) \approx -25{,}493$, and by $t^* = 150$ the divergence is extreme: $x_{12}(150) \approx -81{,}713$, $x_{48}(150) \approx -51{,}998$, while other variables explode in the positive direction ($x_{46}(150) \approx 652{,}260$, $x_{28}(150) \approx 135{,}637$).
The solver fails to reach $t = 200$; the \texttt{InterpolatingFunction} domain
covers approximately $[0, 63\text{--}65]$, leaving roughly $67$--$68\%$ of
the intended horizon extrapolated. The trajectories in
Figures~\ref{fig:hill_zoom} and~\ref{fig:hill_full} make this visible: the
zoomed panel (Figure~\ref{fig:hill_zoom}) pinpoints the onset of instability,
while the full-range panel (Figure~\ref{fig:hill_full}) reveals the subsequent
unbounded growth. The true system would remain positive and bounded by
$\kappa_i / \gamma_i$.

Two further properties of the Hill function amplify the impact of the mechanisms above. First, the slope at the half-activation point $x=\theta$ equals $n/(4\theta)$, coupling steepness and threshold so that they cannot be independently adjusted (the true maximal slope, attained at the inflection point $x=\theta\,((n-1)/(n+1))^{1/n}\neq\theta$, is somewhat larger but exhibits the same coupling). Second, $h^+(0;\,\theta,n) = 0$ for all $n > 0$, so a gene at zero expression contributes nothing to its own production, making the zero state absorbing under Hill dynamics and contradicting observed leaky transcription. Both defects are absent from the logistic formulation~\eqref{eq:logistic}.

\subsection{Stability of the Logistic-Based ODE System}
\label{sec:logistic_stability}

Under the logistic substitution~\eqref{eq:logistic}, all three pathologies disappear simultaneously. The logistic function is globally $C^\infty$ on $\mathbb{R}$, with uniformly bounded derivatives $|f^{\pm\,\prime}| \leq \lambda/4$ (which equals $n/(4\theta)$ under the parameter matching $\lambda = n/\theta$ used here), so the ODE right-hand side inherits this regularity and adaptive solvers can take large time steps throughout the integration. Since $f^\pm$ involves only $\exp(\cdot)$, it is well-defined and real-valued for all $x \in \mathbb{R}$, including negative values arising from numerical overshoot, so no complex-arithmetic contamination is possible. Under the matching $\lambda = n/\theta$, the basal value is $f^+(0;\,\theta,n) = 1/(1+e^{\lambda\theta}) = 1/(1+e^n) > 0$, ensuring that every gene has a strictly positive basal production rate even when all activators are absent, in agreement with observed leaky transcription. Finally, the threshold $\theta$ and steepness $\lambda$ are independent: sharpness can be increased without moving the inflection point.

In the logistic experiment (notebook~\texttt{S2}), \texttt{NDSolve} emits
\emph{no} warnings, completes the integration over $[0,200]$ without
interruption, and the extraction queries at $t^* \in \{10, 30, 50, 100, 150\}$
succeed without extrapolation. All 80 state variables remain non-negative
throughout, and the extracted values are consistent across the five snapshots:
for example, $x_{15}$ starts at $73.4$ at $t = 0$, converges to approximately
$67.9$ by $t = 50$ and holds stably; $x_{22}$ grows to approximately $263$ and
stabilises. Two variables exhibit non-trivial long-run dynamics: the blue curve
$x_2$ oscillates near~$150$ and the orange curve displays sustained periodic
oscillations---a dynamical feature correctly captured without numerical
artefacts.

Figures~\ref{fig:hill_full} and~\ref{fig:hill_zoom} display the Hill-function
trajectories over the full horizon $[0,200]$ and over the early window $[0,65]$
respectively, while Figure~\ref{fig:logistic} shows the logistic counterpart
over $[0,200]$. All panels use the same $n = 3.50918$ and $\theta$ parameters.
Red dash-dotted vertical lines mark the prescribed observation times; the
full-range panels show all five lines ($t^* = 10$, $30$, $50$, $100$, $150$),
while the zoomed panel shows the three lines within its window
($t^* = 10$, $30$, $50$).

\begin{figure}[t]
  \centering
  \includegraphics[width=\textwidth]{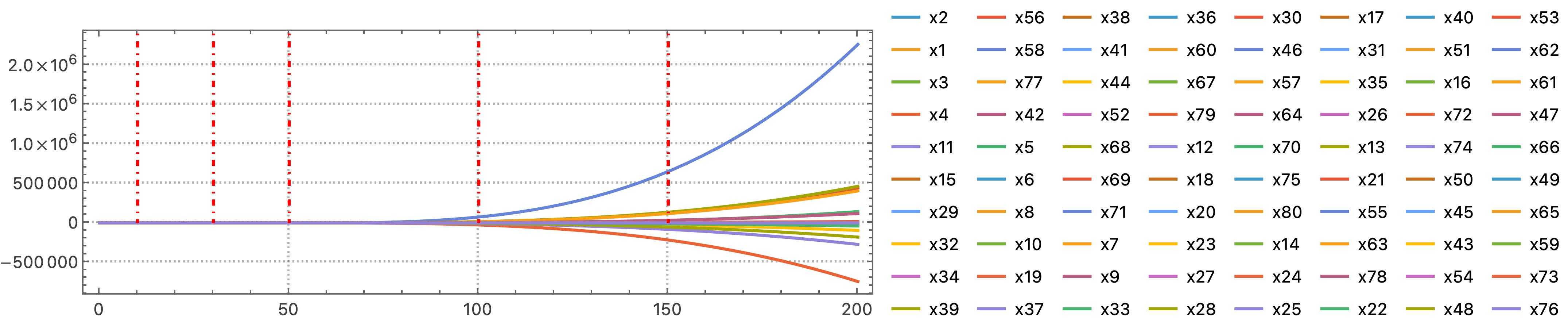}
  \caption{%
    \textbf{Hill-function ODE system ($n = 3.50918$, non-integer),
    full time horizon $t \in [0, 200]$.}
    Trajectories of all 80 state variables. The solver's reliable domain ends
    between $t = 50$ and $t = 100$: beyond that point several curves diverge
    catastrophically. The $y$-axis spans approximately
    $[-0.8\times10^6,\; 2.3\times10^6]$; the most divergent variable
    (blue curve, $x_2$) climbs past $2\times10^6$ by $t = 200$
    (reaching ${\approx}2.3\times10^6$), while the most negative variable
    descends to about $-7.5\times10^5$. Multiple \texttt{NDSolve} warnings about complex-valued
    function evaluations (\texttt{hillm[0.360847 + 0.I, 11.29, 3.50918]}) are
    emitted throughout; \texttt{InterpolatingFunction::dmval} extrapolation
    errors appear at $t^* = 100$ and $t^* = 150$. Red dash-dotted lines mark
    the five observation times $t^* = 10$, $30$, $50$, $100$, $150$.%
  }
  \label{fig:hill_full}
\end{figure}

\begin{figure}[t]
  \centering
  \includegraphics[width=\textwidth]{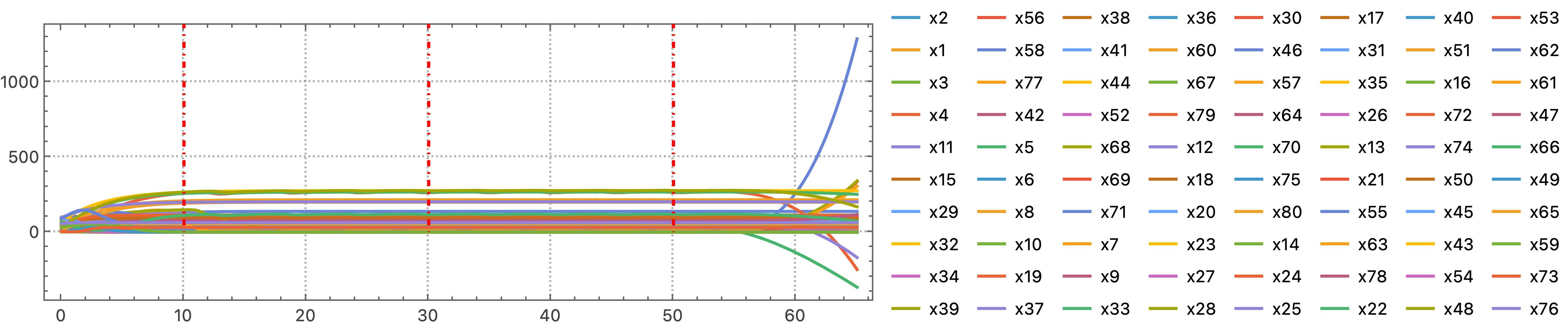}
  \caption{%
    \textbf{Hill-function ODE system ($n = 3.50918$, non-integer),
    early window $t \in [0, 65]$.}
    The same simulation as Figure~\ref{fig:hill_full}, restricted to $[0, 65]$
    to resolve the onset of instability. From $t = 0$ to approximately $t = 52$
    almost all variables remain within biologically plausible ranges. The
    \texttt{NDSolve::ndsz} complex-arithmetic warning fires at $t \approx 52.64$,
    at which point the state already reveals that $x_4$ carries a tiny imaginary
    component ($\sim 10^{-69}$), confirming that complex contamination began
    silently at an earlier time. Integration continues until around
    $t = 63$--$65$, when one variable (blue, $x_2$) begins an exponential ascent
    while at least one other (green curve at the bottom) crosses zero and turns
    negative, marking the final collapse of the solver's domain. Three red
    dash-dotted lines at $t^* = 10$, $30$, and $50$ fall within this window;
    the remaining two ($t^* = 100$, $150$) lie off-screen to the right.
    Crucially, the visual smoothness of the curves for $t \in [0, 52]$ is not
    evidence of correctness: the solver is faithfully integrating a
    complex-corrupted surrogate system. All five extractions are unreliable; the
    two off-window queries additionally invoke domain extrapolation.%
  }
  \label{fig:hill_zoom}
\end{figure}

\begin{figure}[t]
  \centering
  \includegraphics[width=\textwidth]{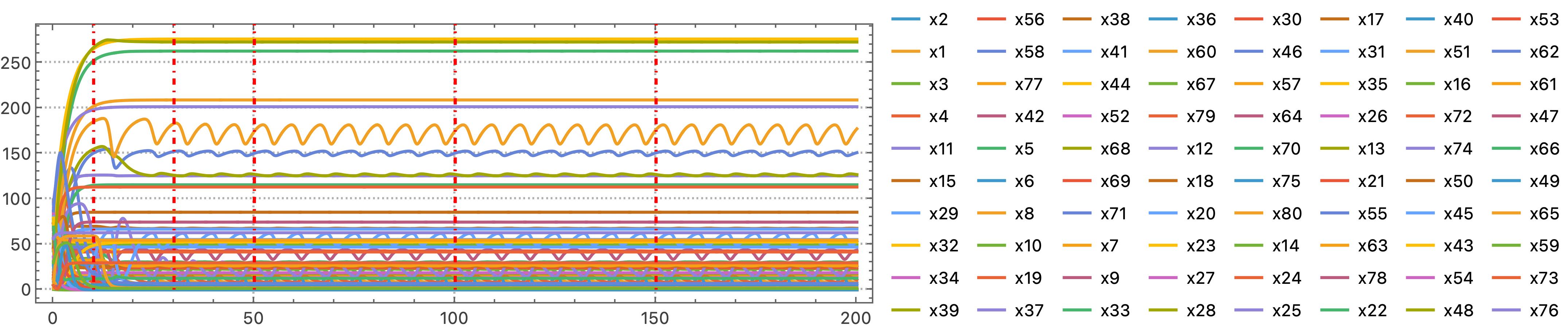}
  \caption{%
    \textbf{Logistic-function ODE system (same $n$ and $\theta$ parameters),
    full time horizon $t \in [0, 200]$.}
    Trajectories of all 80 state variables. \texttt{NDSolve} completes the
    integration without any warnings. All variables remain strictly non-negative
    throughout and converge to bounded steady states; the $y$-axis is confined to
    $[0, 275]$, consistent with the biological bound $\kappa_i/\gamma_i$. Most
    variables settle before $t = 50$. Two variables exhibit non-trivial long-run dynamics: the blue curve $x_2$ oscillates near $150$ and the orange curve displays sustained periodic oscillations, a dynamical feature correctly captured without numerical artefacts. All five extractions at
    $t^* \in \{10, 30, 50, 100, 150\}$ lie within the solver's domain and yield
    physically meaningful values.%
  }
  \label{fig:logistic}
\end{figure}

\subsection{Hill Functions with Real Exponent: A Generically Unreliable Framework}
\label{sec:general_failure}

The experiment reported above is not an isolated unlucky draw. The failures documented in Section~\ref{sec:hill_pathologies_empirical} are \emph{structural} consequences of the mathematical definition of $h^\pm$, and the conditions producing them are present whenever $n \notin \mathbb{N}$. Fitted Hill coefficients in the systems biology literature consistently report non-integer values (see, e.g.,~\citep{alon, ingalls2013mathematical}): transcription-factor binding curves typically yield $n \in [1.2, 3.8]$; cooperative enzyme kinetics give $n \approx 1.7$--$2.4$; and synthetic toggle switches and repressilators are fitted with $n \approx 2.1$--$4.6$. Our simulation draws $n$ from $\mathrm{U}(1,5)$, so the probability of drawing an integer is exactly zero, and the fixed value $n = 3.50918$ is representative of the generic situation encountered in practice.

The various failure modes are manifestations of a single underlying cause: the ODE system derived from Hill functions with real-valued $n$ is \emph{numerically unstable} in a precise sense. A numerical integration method applied to $\dot{\mathbf{x}} = \mathbf{F}(\mathbf{x})$ is locally stable only if small perturbations do not grow unboundedly, which requires, at minimum, that $\mathbf{F}$ be Lipschitz-continuous in a neighbourhood of the trajectory~\citep{hairer1993solving}. For $n \in (k, k+1) \subset \mathbb{R}$ ($k \in \mathbb{N}_0$), the right-hand side $\mathbf{F}$ fails to be sufficiently smooth on a neighbourhood of the boundary of the positive orthant in two distinct, compounding ways. For $0 < n < 1$ (the case $k=0$), the partial derivative of $F_i$ with respect to any of its regulator variables $x_j$ diverges as $x_j \to 0^+$, so $\mathbf{F}$ is not locally Lipschitz at the boundary; the Lipschitz constant $L = \sup\|\nabla\mathbf{F}\|$ is infinite near the origin, standard error bounds of the form $\|\mathbf{e}(t)\| \leq \|\mathbf{e}(0)\|\,e^{Lt}$ give no useful guarantee, and the classical existence--uniqueness theorem itself does not apply directly. For $n > 1$ (the case $k \geq 1$), $\mathbf{F}$ is locally Lipschitz but only $C^{\lfloor n \rfloor}$: the $(k{+}1)$-th derivative of $F_i$ with respect to any regulator $x_j$ diverges as $x_j \to 0^+$, so standard convergence theorems for Runge--Kutta methods of order $p > k$ cease to apply near the boundary, and adaptive solvers that estimate higher derivatives produce spuriously large local error estimates and are forced to reduce step size dramatically. In both regimes, moreover, once any $x_i$ overshoots to a negative value---even by floating-point rounding of order $10^{-15}$---the expression $x_i^n$ is no longer a real number, and the trajectory leaves the domain on which $\mathbf{F}$ is defined as a real-valued vector field. That $x_i^n$ is non-real for $x_i<0$ and non-integer $n$ is universal: it holds in every computing environment. The resulting \emph{failure mode}, however, depends on how the environment evaluates fractional powers of negative numbers. Environments that return a value on the complex principal branch---\textit{Mathematica}, and Python's built-in \texttt{**} operator---propagate the contamination silently, exactly as documented above. Environments that instead return \texttt{NaN}, notably NumPy's \texttt{power}, cause the affected step to be rejected by the adaptive error controller, so the failure surfaces noisily---as severe step-size reduction or stalling---rather than as silent corruption. The silent, artefact-free corruption analysed in this paper is therefore characteristic of complex-propagating environments; the underlying obstruction---that the real-valued Hill field is undefined past a zero crossing---is not.

The positive-orthant boundary is precisely the region that low-expression states and transient dynamics explore. As a consequence, standard convergence and stability theorems for Runge--Kutta and multistep methods do not apply, and the computed solution carries no guaranteed accuracy bound even in exact arithmetic. This loss of the convergence guarantees is independent of the programming language, tolerance, or algorithm: it follows from the limited smoothness of the Hill field itself. In finite-precision arithmetic, a trajectory that passes through a neighbourhood of zero---virtually inevitable for a randomly initialised $80$-dimensional system---additionally triggers the fractional-power obstruction described above, with the environment-dependent consequences identified there. By contrast, each logistic factor in~\eqref{eq:logistic} is globally $C^\infty$ and globally Lipschitz (with constant $\lambda/4$), so the right-hand side of the multi-gene ODE system is globally $C^\infty$ and globally Lipschitz on $\mathbb{R}^N$ with the explicit constant recalled in Section~\ref{sec:prelim} (established in the companion paper~\citep{belgacem2026framework}), and standard stability theory applies everywhere, including near and below zero.

The pathology compounds with network size and observation horizon. In a small two- or three-variable system, solver failure is easy to detect visually; but in a high-dimensional network ($N = 80$ here, with realistic GRN models routinely reaching $N > 100$), the failure of a single variable to remain non-negative suffices to corrupt the entire right-hand side simultaneously, and the larger the network, the more likely that at least one trajectory passes through or near zero during the transient phase. Longer observation horizons increase this probability further. In our experiment the \texttt{InterpolatingFunction} domain ends at $t \approx 63$--$65$ out of $t_{\max} = 200$, leaving roughly $67$--$68\%$ of the intended horizon extrapolated; but this is the secondary failure, the primary pathology being the trajectory corruption that begins silently, well before the visible \texttt{ndsz} warning, and that makes all five extracted observations unreliable.

Practitioners sometimes attempt to circumvent these problems through ad~hoc patches. Some address an individual symptom, but none restores the full set of properties that make a kernel reliable. \emph{Evaluating the Hill kernel at the clamped argument $\max(x_j,0)$} does remove the complex-arithmetic contamination---the kernel is then never asked for a fractional power of a negative number---and, unlike $|x|^n$ below, the resulting function remains monotone in $x_j$. It does not, however, remove the absorbing off-state, since $h^+(\max(0,0);\theta,n)=h^+(0;\theta,n)=0$; nor does it improve smoothness, since the clamped field is still only $C^{\lfloor n\rfloor}$ at the orthant boundary, so the order loss of Proposition~\ref{prop:apriori}(ii) persists; and it is a non-biological implementation layer that the logistic kernel does not require. \emph{Replacing $x^n$ by $|x|^n$} likewise restores real-valuedness but makes $h^+$ an even function near zero, destroying the monotone sigmoid shape that gives the Hill function its biological meaning. \emph{Clamping the integrator state to $[0,\infty)^N$ after each step} projects away negative excursions but introduces a non-smooth correction that interferes with adaptive error control and itself blocks gradient-based parameter estimation. \emph{Adding a small offset $\epsilon > 0$ to $x$} removes the singularity at the origin but introduces an arbitrary parameter with no biological interpretation, shifts the effective threshold, and breaks the normalisation $h^+ \in [0,1]$. \emph{Rounding $n$ to the nearest integer} changes the fitted parameter value, potentially moves the system across a bifurcation boundary, and is epistemically unjustified when $n$ is determined by experimental data. Finally, \emph{tightening solver tolerances} (reducing \texttt{AccuracyGoal} and \texttt{PrecisionGoal}) postpones but does not prevent the failure: a complex-propagating environment will still evaluate $x^n$ once $x$ becomes negative, and the $C^{\lfloor n\rfloor}$ order loss is independent of tolerance. The recurring point is that each patch trades one defect for another, whereas the logistic kernel has none of these defects to begin with.

The theoretical analysis and numerical experiments of this section therefore converge on a single conclusion: \emph{Hill functions with non-integer cooperativity exponent are generically unsuitable as regulatory kernels in ODE models derived from Boolean networks.} The logistic substitution, with the parameter correspondence $\lambda = n/\theta$, resolves every one of the defects simultaneously: the logistic right-hand side is globally $C^\infty$, real-valued for all arguments, strictly positive at the origin, and has fully decoupled steepness and threshold. Under identical network, parameters, and solver settings, it produces a complete, warning-free integration and physically consistent state extractions at all five observation times. We therefore propose logistic functions as the default regulatory kernel for Boolean-to-ODE translation and, more broadly, for GRN models calibrated to experimental dose-response data, where the cooperativity coefficient is virtually never an integer~\citep{alon, ingalls2013mathematical}.

\subsection{An \emph{A Priori} Error Bound for Logistic-Based Integration}
\label{sec:apriori}

The empirical contrast of
Sections~\ref{sec:hill_pathologies_empirical}--\ref{sec:general_failure} can be
promoted from an observation into a quantitative \emph{a priori} guarantee. For
a numerical-simulation study this is the decisive point: not merely that the
logistic integration completed without warnings, but that its error is
controlled, in advance, by an explicit and computable bound of classical
type---and that for the Hill system no bound of that type can exist.
Proposition~\ref{prop:apriori} makes this precise; it is the rigorous
counterpart of the mechanism analysis of Section~\ref{sec:general_failure}.

\begin{proposition}[Convergence of one-step methods for the logistic GRN system]
\label{prop:apriori}
Let $\dot{\mathbf{x}}=\mathbf{F}(\mathbf{x})$ denote the logistic
product-of-logistics system~\eqref{eq:multi_gene_system} on $\mathbb{R}^{N}$,
whose right-hand side is globally $C^{\infty}$ and globally Lipschitz on
$\mathbb{R}^{N}$ with the explicit constant
\[
  L \;=\; \max_{1\le i\le N}
          \Bigl(\gamma_i+\tfrac14\,\kappa_i\lambda\,
                (|\mathcal{A}_i|+|\mathcal{R}_i|)\Bigr)
\]
recalled in Section~\ref{sec:prelim}.
\begin{enumerate}
\item[(i)] (\emph{Logistic system: full-order convergence with explicit
constants.})  Fix any initial state $\mathbf{x}_0\in\mathbb{R}^{N}_{\ge0}$ and
set $b_i=\max\{x_{0,i},\,\kappa_i/\gamma_i\}$.  The box
$\mathcal{B}'=\prod_{i}[0,b_i]$ is compact and forward invariant, so the exact
solution satisfies $\mathbf{x}(t)\in\mathcal{B}'$ for all $t\ge0$.  Let a
one-step method of order $p$ (for instance a Runge--Kutta method of order $p$)
be applied with constant step size $h$.  Then the global error at $t_n=nh$
satisfies
\begin{equation}
  \bigl\|\mathbf{x}_n-\mathbf{x}(t_n)\bigr\|
  \;\le\;
  \frac{C_p}{L}\;h^{p}\,\bigl(e^{L t_n}-1\bigr),
  \label{eq:apriori_bound}
\end{equation}
where $C_p<\infty$ depends only on the method and on
$\sup_{t\ge0}\|\mathbf{x}^{(p+1)}(t)\|$, the latter being finite because the
trajectory remains in the compact set $\mathcal{B}'$ on which $\mathbf{F}$ is
$C^{\infty}$.  The method therefore attains its full order $p$ uniformly on
$[0,\infty)$, and both $L$ and $C_p$ are computable from
$(\boldsymbol{\kappa},\boldsymbol{\gamma},\lambda)$ \emph{before} the
integration is run, so a step size meeting any prescribed error tolerance can
be chosen a priori.
\item[(ii)] (\emph{Hill system: no bound of the
form~\eqref{eq:apriori_bound}.})  Let $\mathbf{F}_{\mathrm{H}}$ be the
right-hand side produced by the same Boolean-to-ODE translation with the Hill
kernels~\eqref{eq:hill} of non-integer exponent $n$.  If $0<n<1$, then
$\partial F_{\mathrm{H},i}/\partial x_j\to\infty$ as $x_j\to0^{+}$, so
$\mathbf{F}_{\mathrm{H}}$ admits no finite Lipschitz constant on any
neighbourhood of $\partial\mathbb{R}^{N}_{\ge0}$ and the constant $L$
in~\eqref{eq:apriori_bound} is infinite.  If $n>1$, then
$\mathbf{F}_{\mathrm{H}}$ is locally Lipschitz on the open orthant but only of
class $C^{\lfloor n\rfloor}$, so for every method of order
$p>\lfloor n\rfloor$ the local truncation error fails to be $O(h^{p+1})$ near
$\partial\mathbb{R}^{N}_{\ge0}$ and the order-$p$
estimate~\eqref{eq:apriori_bound} does not hold.  In both cases
$\mathbf{F}_{\mathrm{H}}$ is undefined as a real-valued field at any state with
a negative coordinate, so once a numerical overshoot crosses zero the
trajectory leaves the domain on which the convergence theory is even
formulated.
\end{enumerate}
\end{proposition}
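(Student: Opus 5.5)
For part~(i) I will first establish forward invariance of $\mathcal{B}'$ face by face, as in Corollary~\ref{cor:autoreg_stability}(i). Since every regulatory factor $\Phi_i$ of~\eqref{eq:multi_gene_system} lies in $(0,1)$, on the face $\{x_i=0\}$ we have $\dot x_i=\kappa_i\Phi_i>0$. On the face $\{x_i=b_i\}$ with $b_i\ge\kappa_i/\gamma_i$ we have $\dot x_i\le\kappa_i-\gamma_i b_i\le0$. Nagumo's theorem then gives invariance of the compact box, and with the global Lipschitz constant $L$ recalled in Section~\ref{sec:prelim} the solution exists, is unique, and stays in $\mathcal{B}'$ for all $t\ge0$.

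Next I will bound derivatives along the trajectory. Differentiating $\dot{\mathbf{x}}=\mathbf{F}(\mathbf{x})$ repeatedly expresses $\mathbf{x}^{(p+1)}$ as a polynomial in the partial derivatives of $\mathbf{F}$ up to order $p$ evaluated at $\mathbf{x}(t)$. These partial derivatives are continuous on the compact set $\mathcal{B}'$, hence bounded. They are in fact bounded on all of $\mathbb{R}^N$, since every derivative of a logistic factor is a polynomial in $f^\pm$ times a power of $\lambda$, so the bounds are explicit in $(\boldsymbol{\kappa},\boldsymbol{\gamma},\lambda)$.

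The classical theory of one-step methods (Hairer--N\o rsett--Wanner, \citep{hairer1993solving}) then gives a local truncation error $\le C_p h^{p+1}$, with $C_p$ built from these sup-norms and the method coefficients. The increment function $\Phi_h$ of an explicit Runge--Kutta method is Lipschitz, with constant $\Lambda=L(1+O(hL))$ inherited from $\mathbf{F}$.

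The standard Lady Windermere's fan / discrete Gronwall argument then sums the propagated local errors:
\[
\|\mathbf{e}_n\|\le\sum_{k<n}C_ph^{p+1}e^{\Lambda(t_n-t_{k+1})}\le\frac{C_p}{\Lambda}h^p\bigl(e^{\Lambda t_n}-1\bigr).
\]
The main obstacle is the mismatch between $\Lambda$ and $L$. The increment Lipschitz constant is only $L(1+\tfrac12 hL+\dots)$ rather than $L$ itself, so I will either absorb the discrepancy into $C_p$ for $h\le h_0$ or state the bound with the method's Lipschitz constant. A secondary point is that the numerical iterates may leave $\mathcal{B}'$. This is harmless, because the Lipschitz bound is global on $\mathbb{R}^N$ and the local errors are evaluated on the exact solution, which stays in $\mathcal{B}'$. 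Finally, all constants depend only on the parameters, which gives the a priori step-size choice.

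For part~(ii) I will compute directly from~\eqref{eq:hill}. We have $\partial_x h^+=n\theta^nx^{n-1}/(\theta^n+x^n)^2\sim n\theta^{-n}x^{n-1}$ as $x\to0^+$. This derivative diverges for $0<n<1$, and multiplying by the remaining positive product factors in $F_{\mathrm H,i}$ shows that $F_{\mathrm H}$ has no finite Lipschitz constant near the boundary face. For $n>1$, the $(\lfloor n\rfloor+1)$-th derivative behaves like $x^{n-\lfloor n\rfloor-1}\to\infty$, so $\mathbf{F}_{\mathrm H}\notin C^{\lfloor n\rfloor+1}$ there. Consequently $\mathbf{x}^{(p+1)}$ is not bounded along trajectories approaching the boundary, and the Taylor argument for a local error of $O(h^{p+1})$ with $p>\lfloor n\rfloor$ fails. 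Non-real-valuedness for $x_j<0$ is immediate from $x^n=|x|^ne^{i\pi n}$ with $\sin(\pi n)\ne0$.

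I will phrase (ii) as the failure of the hypotheses of the bound, not as a proven lower bound on the error. Proving that the order actually degrades would require a specific trajectory touching the boundary, and that is beyond what the statement needs.
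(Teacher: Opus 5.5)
Your route is the paper's route. Part~(i) uses the sign bounds $-\gamma_i x_i\le\dot x_i<\kappa_i-\gamma_i x_i$ with Nagumo's theorem for invariance of $\mathcal{B}'$, bounded derivatives of $\mathbf{F}$ for a finite $C_p$, and the classical convergence theorem for one-step methods. Part~(ii) uses the same asymptotics $\partial_x h^+\sim n\theta^{-n}x^{n-1}$, the blow-up $x^{n-\lfloor n\rfloor-1}$, and $x^n=|x|^n e^{i\pi n}$. The paper simply applies the classical theorem ``with Lipschitz constant $L$'' and never addresses the $\Lambda$-versus-$L$ issue you raise, so your concern is legitimate. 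However, neither of your two resolutions yields the bound as stated. Your $\Lambda=L(1+O(hL))$ exceeds $L$, so the ratio $e^{\Lambda t_n}/e^{Lt_n}=e^{(\Lambda-L)t_n}$ is unbounded on $[0,\infty)$ for every fixed $h$. Absorbing it into $C_p$ therefore produces a constant that depends on a finite horizon $T$, contrary to the $t_n$-independent $C_p$ of the statement. Stating the bound with $\Lambda$ is correct, but it proves a different, generally weaker, exponential rate.

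The missing idea is the other form of the Lady Windermere's fan argument, in which the local error committed at step $k$ is transported to $t_n$ by the exact flow $\varphi$ rather than by the numerical method. Because $\mathbf{F}$ is globally Lipschitz with constant $L$, Gronwall gives $\|\varphi_t(\mathbf{y})-\varphi_t(\mathbf{z})\|\le e^{Lt}\|\mathbf{y}-\mathbf{z}\|$. Summing $C_p h^{p+1}e^{L(t_n-t_{k+1})}$ over $k<n$ then gives exactly $\frac{C_p}{L}h^p(e^{Lt_n}-1)$.

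The price is that the local errors are now taken at the numerical iterates $\mathbf{x}_k$ rather than on the exact trajectory. Your remark that leaving $\mathcal{B}'$ is harmless therefore no longer applies, and you need a uniform local-error bound on a set containing all iterates. Both ingredients are available here:
\begin{itemize}
\item \emph{The iterates stay bounded.} For an explicit Runge--Kutta method the linear part of $\mathbf{F}$ is diagonal, so $x_{k+1,i}=R(-h\gamma_i)\,x_{k,i}+h\rho_{k,i}$, where $R$ is the stability polynomial and $|\rho_{k,i}|\le c\,\kappa_i$ with $c$ depending only on the method and on $h_0\max_i\gamma_i$. Since $|R(-h\gamma_i)|\le 1-h\gamma_i/2$ for $h\le h_0$, induction gives $|x_{k,i}|\le\max\{x_{0,i},\,2c\kappa_i/\gamma_i\}$ for all $k$.
\item \emph{The local error is uniform there.} On that bounded set the local error is $O(h^{p+1})$ uniformly, by your observation that every derivative of $\mathbf{F}$ of order at least one is globally bounded.
\end{itemize}
That same observation also covers a point the paper's proof passes over. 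The Runge--Kutta local error constant involves derivatives of $\mathbf{F}$ at the stage values, not only $\sup_t\|\mathbf{x}^{(p+1)}(t)\|$, and these stage values need not lie in $\mathcal{B}'$.
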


\begin{proof}
\emph{(i)}  Each logistic factor in~\eqref{eq:multi_gene_system} lies in
$(0,1)$, so $0<\kappa_i\prod(\cdots)<\kappa_i$ and hence
$-\gamma_i x_i\le\dot{x}_i<\kappa_i-\gamma_i x_i$.  Thus $\dot{x}_i<0$ whenever
$x_i\ge\kappa_i/\gamma_i$ and $\dot{x}_i\ge0$ whenever $x_i\le0$, so every box
$\prod_i[0,b_i]$ with $b_i\ge\kappa_i/\gamma_i$ is forward invariant by
Nagumo's theorem~\citep{blanchini2008set}; the choice
$b_i=\max\{x_{0,i},\kappa_i/\gamma_i\}$ yields a compact forward-invariant set
containing $\mathbf{x}_0$, whence $\mathbf{x}(t)\in\mathcal{B}'$ for all
$t\ge0$.  On the compact set $\mathcal{B}'$ the $C^{\infty}$ field
$\mathbf{F}$ has all derivatives bounded, so
$\mathbf{x}^{(p+1)}=\tfrac{d^{p}}{dt^{p}}\mathbf{F}(\mathbf{x})$ is uniformly
bounded and the order constant $C_p$ is finite.  The two hypotheses of the
classical convergence theorem for one-step methods---a globally Lipschitz field
and a solution with bounded $(p{+}1)$-th derivative---are therefore both met,
and the theorem~\citep{hairer1993solving}, applied with Lipschitz constant $L$
and zero initial error, yields~\eqref{eq:apriori_bound}.

\emph{(ii)}  For the Hill kernel
$h^{+}(x;\theta,n)=x^{n}/(\theta^{n}+x^{n})$,
\[
  \frac{\partial h^{+}}{\partial x}
  =\frac{n\,\theta^{n}x^{n-1}}{(\theta^{n}+x^{n})^{2}}
  \;\sim\;\frac{n}{\theta^{n}}\,x^{\,n-1}
  \qquad(x\to0^{+}).
\]
For $0<n<1$ the exponent $n-1<0$ makes this diverge, so no finite Lipschitz
constant exists near the boundary.  For $n>1$ the first derivative is bounded,
but the $(\lfloor n\rfloor+1)$-th derivative of $x^{n}$ scales as
$x^{\,n-\lfloor n\rfloor-1}\to\infty$, so $\mathbf{F}_{\mathrm{H}}$ is exactly
of class $C^{\lfloor n\rfloor}$ and the solution is then only
$C^{\lfloor n\rfloor+1}$; the Taylor expansion of order $p+1$ underlying an
$O(h^{p+1})$ local truncation error therefore does not exist for
$p>\lfloor n\rfloor$ near $\partial\mathbb{R}^{N}_{\ge0}$.  Finally, for $x<0$
and $n\notin\mathbb{N}$ the principal branch gives
$x^{n}=|x|^{n}e^{i\pi n}\notin\mathbb{R}$, so $\mathbf{F}_{\mathrm{H}}$ is not a
real-valued vector field at such states.
\qedhere
\end{proof}

Proposition~\ref{prop:apriori} is, from a numerical-simulation standpoint, the
theoretical core of the present study.  It establishes that the reliability gap
measured in Section~\ref{sec:numerical_comparison} is not a parameter-specific
accident but a structural fact: the logistic formulation places the
Boolean-derived ODE system squarely inside the hypotheses of the classical
convergence theory of one-step methods, with a Lipschitz constant and an error
constant that are \emph{explicit functions of the network parameters} and can
be evaluated before any integration is performed.  This is the precise,
quantitative content of the word ``robust'': a step size guaranteeing a
prescribed accuracy is available a priori.  The Hill formulation with a fitted,
and therefore generically non-integer, exponent violates these hypotheses
structurally, by part~(ii).  The $80$-gene experiment of
Section~\ref{sec:numerical_comparison} is the empirical shadow of the
proposition---the logistic run realises the guaranteed bounded-error regime
of~\eqref{eq:apriori_bound}, while the Hill run exhibits exactly the breakdown
that part~(ii) predicts---and the proposition shows that the structural
breakdown of part~(ii), namely the unavailability of an \emph{a priori,
uniform} error bound of the form~\eqref{eq:apriori_bound}, must recur for
every Boolean-derived network with non-integer exponent,
independently of solver, language, or tolerance.

\subsection{Numerical Verification of the Convergence Bound}
\label{subsec:convergence_check}

Proposition~\ref{prop:apriori}(i) is an a priori statement: it guarantees,
before any integration is performed, that a one-step method of order $p$
applied to the logistic GRN system incurs a global error $O(h^{p})$. We close
the section by confirming this order numerically and, in doing so, delimiting
precisely what part~(ii) does and does not assert.

We integrate the six-gene logistic motif of
Section~\ref{subsec:minimal_boolean} on $[0,T]$ with $T=6$, using three
explicit one-step methods of known order---the forward Euler method ($p=1$),
the explicit midpoint method ($p=2$), and the classical fourth-order
Runge--Kutta method ($p=4$)---at the constant step sizes $h=T/2^{k}$ for
$k=6,\dots,14$. The global error $\|\mathbf{x}_N-\mathbf{x}(T)\|_2$ at the
endpoint is measured against a high-accuracy reference solution.
Figure~\ref{fig:convergence} plots this error against $h$ on logarithmic axes.

The three error curves are straight, with fitted slopes $1.00$, $2.01$, and
$4.03$, each agreeing with the order $p$ of its method. The Runge--Kutta error
reaches the double-precision rounding floor near $h=3\times10^{-3}$ and then
levels off, as expected of a fixed-step computation. The experiment confirms
the order-$p$ scaling of the bound~\eqref{eq:apriori_bound}: on the logistic
GRN system every method attains its full classical order, so the step size
needed for a prescribed accuracy is the one the proposition specifies in
advance. Repeating the computation in a second language and arithmetic
environment reproduces the same three slopes, as the structural argument
requires.

Running the identical experiment with the Hill kernels of the same exponent
$n=3.50918$ yields, on this trajectory, the same three orders. This does not
contradict part~(ii). The trajectory considered here remains in the open
orthant $\mathbb{R}^{N}_{>0}$, on which the Hill field is real-analytic, so a
method applied to it sees a smooth problem and converges accordingly. What
part~(ii) denies the Hill system is the bound~\eqref{eq:apriori_bound} as an
\emph{a priori and uniform} guarantee: its error constant degrades without
bound as a trajectory approaches $\partial\mathbb{R}^{N}_{\ge0}$, no finite
Lipschitz constant exists there when $n<1$, and the field ceases to be
real-valued the instant a numerical overshoot crosses zero---the qualitative
breakdown documented in Section~\ref{sec:numerical_comparison}. The contrast
established by Proposition~\ref{prop:apriori} is therefore not that the Hill
system fails to converge on benign interior trajectories, but that only the
logistic system carries a convergence guarantee that can be certified before
the integration is run and that holds uniformly up to the boundary of the
non-negative orthant, where Boolean-derived dynamics naturally operate.

\begin{figure}[t]
  \centering
  \includegraphics[width=0.78\textwidth]{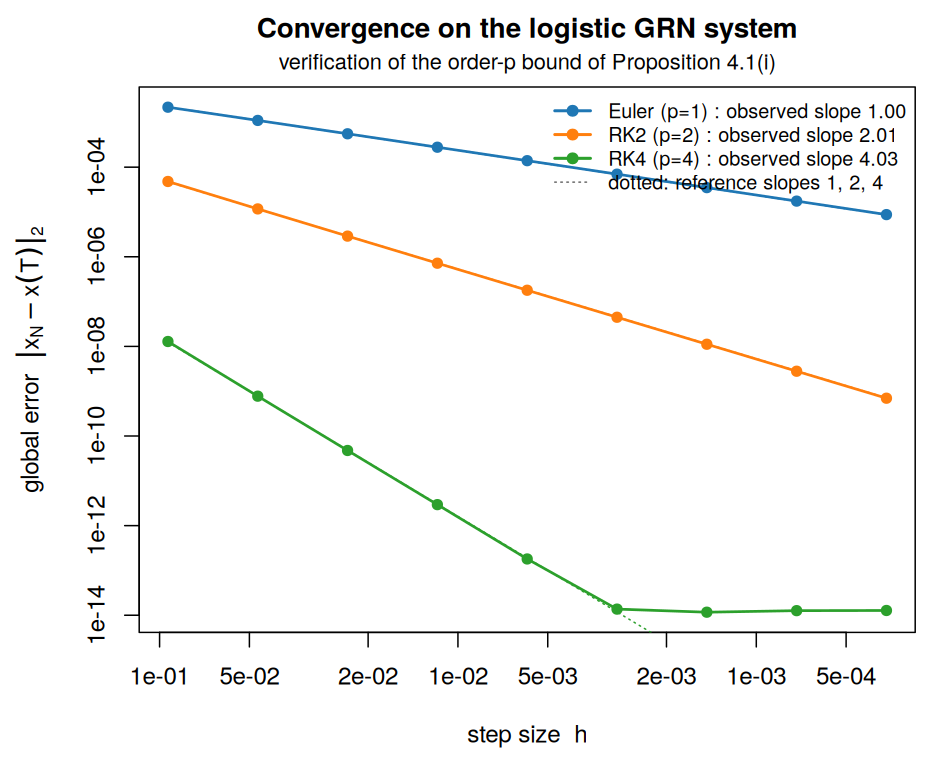}
  \caption{%
    Numerical verification of Proposition~\ref{prop:apriori}(i). Global error
    at $t=T=6$ for the six-gene logistic motif of
    Section~\ref{subsec:minimal_boolean}, integrated by the forward Euler,
    explicit midpoint, and classical Runge--Kutta methods at constant step
    sizes $h=T/2^{k}$, $k=6,\dots,14$, against a high-accuracy reference
    solution. The fitted slopes ($1.00$, $2.01$, $4.03$) match the method
    orders $p=1,2,4$; dotted lines are exact reference slopes. The
    Runge--Kutta curve levels off near $h=3\times10^{-3}$ at the
    double-precision rounding floor.%
  }
  \label{fig:convergence}
\end{figure}

\section{Conclusion}
\label{sec:conclusion}

This paper has presented a simulation study comparing Hill-based and
logistic-based ordinary differential equation models of gene regulatory
networks, with a focus on the numerical reliability of integrating
Boolean-derived ODE systems and on the prevention of irreversible expression
shutdown. The evidence assembled here, across an experimentally grounded
single-gene motif and an $80$-gene benchmark, supports a single conclusion: the
Hill function is a generically unreliable kernel for large-scale GRN
simulation, and the logistic function is a robust replacement that requires no
specialised numerical machinery.

The low-expression analysis of Section~\ref{sec:biological} isolated the first
failure mode. Because the Hill activation function vanishes identically at zero
input, a bistable positive-autoregulation circuit acquires an absorbing
off-state. Theorem~\ref{thm:bistability} characterised the saddle-node set of
the logistic model through the explicit transcendental equation $e^{z}-z =
\lambda\theta-1$, identified the threshold $\lambda\theta=2$ separating
monostable from bistable regimes, and showed that the bistable window widens
asymptotically as $(\theta,\,e^{\lambda\theta-1}/\lambda)$;
Corollary~\ref{cor:autoreg_stability} classified the stability of every
equilibrium and certified global asymptotic stability in the monostable cases.
Using biophysically grounded \textit{E.~coli} parameters
(Table~\ref{tab:parameters}), numerical simulation showed that the logistic
model escapes the off-state in approximately $44$~minutes through basal
production alone---consistent with a conservative linear analytical estimate of
${\approx}58$~minutes and with galactose-operon induction kinetics---whereas the
Hill model produces mRNA synthesis more than three orders of magnitude below the
degradation rate at the initial condition, so its trajectory reaches a transient
peak well below the unstable separatrix at $x^*\approx0.041$ and remains
permanently confined near zero with no intrinsic recovery mechanism.

The large-scale experiment of Section~\ref{sec:numerical_comparison} provided
the most direct evidence of the reliability gap. On an $80$-gene
Boolean-derived ODE system with non-integer Hill exponent $n\approx3.509$, the
Hill-based solver entered silent complex-valued arithmetic contamination from
the first moment any state variable overshot zero---with imaginary components of
order $10^{-69}$ appearing well before the visible \texttt{NDSolve::ndsz}
warning fired at $t\approx52.64$---and produced smooth, visually plausible
trajectories that were solutions of a corrupted surrogate system rather than the
true biological model. Integration terminated near $t\approx63$--$65$, leaving
roughly $67$--$68\%$ of the intended horizon covered only by unconstrained
polynomial extrapolation, with biologically impossible concentrations among the
extrapolated values, and no visual artefact to signal the failure. Under
identical parameters and initial conditions, the logistic formulation completed
the full integration over $t\in[0,200]$ without a single warning, with all $80$
state variables strictly non-negative and bounded throughout. The decisive
structural difference is that each logistic factor is globally $C^\infty$ and
globally Lipschitz, so the right-hand side of the logistic ODE system satisfies
the hypotheses of the standard convergence and stability theory of Runge--Kutta
and multistep methods everywhere in state space, including near and below zero,
whereas the Hill right-hand side does not.
Proposition~\ref{prop:apriori} sharpens this structural statement into a
quantitative one: it provides an explicit a priori global-error bound for
logistic-based integration, with Lipschitz and error constants computable
from the network parameters before integration begins, and proves that no
bound of this form can exist for the Hill system.

The framework studied here is ready for immediate deployment. Implementations
require only standard numerical integration libraries, and the logistic
structure is natively compatible with the automatic-differentiation tools used
in modern machine-learning platforms such as TensorFlow, PyTorch, and JAX.
Natural directions for future work include extension to stochastic formulations
that account for intrinsic noise in low-copy-number regimes, the incorporation
of spatial dynamics and cell-to-cell communication in multicellular systems,
and the development of reduced-order surrogates for accelerated large-scale
simulation. By replacing Hill functions with their logistic counterparts while
preserving sigmoidal dynamics, modellers retain decades of accumulated
Hill-based modelling intuition while gaining the numerical reliability that
demanding large-scale simulation requires.

\section*{Statements and Declarations}

\subsection*{Competing interests}
The author declares that he has no competing financial interests or
personal relationships that could have appeared to influence the work
reported in this paper.

\subsection*{Funding}
This research received no specific grant from any funding agency in the
public, commercial, or not-for-profit sectors.

\subsection*{Data availability}
The low-expression genetic-oscillator comparison of
Section~\ref{sec:biological} (Figure~\ref{fig:Oscillateur_original_l_H}) and
the positive-autoregulation analysis (Section~\ref{ex:positive_autoregulation},
Figure~\ref{fig:autoreg_sim}) were conducted in R using the \texttt{deSolve}
package's \texttt{ode} function, as was the Boolean-derived six-gene motif of
Section~\ref{subsec:minimal_boolean} (Figure~\ref{fig:boolean_motif}). The positive-autoregulation analysis employed
the biophysically grounded \textit{E.~coli} parameters documented in
Table~\ref{tab:parameters}: $k_m = 0.003$~molecules\,s$^{-1}$, $k_{dm} = 0.001$~s$^{-1}$,
$k_p = 0.002$~s$^{-1}$, $k_{dp} = 0.00001$~s$^{-1}$, $\lambda = n = 3$,
$\theta = c = 1$, with initial conditions $m(0) = x(0) = 0.01$. The $80$-gene
Boolean-derived ODE experiments of Section~\ref{sec:numerical_comparison} were
implemented in \textit{Mathematica}; Mathematica notebooks implementing
the $80$-gene Hill and logistic experiments (\texttt{S1}, \texttt{S2}) are provided as supplementary
material, with all parameters hard-coded for full reproducibility. The convergence
verification of Section~\ref{subsec:convergence_check}
(Figure~\ref{fig:convergence}) was carried out independently in R and in
Python, the two implementations agreeing to the fitting precision. No
experimental datasets were generated; all parameter values are drawn from
published literature.

\bibliographystyle{elsarticle-num}
\bibliography{mybibfile_corrected}

\end{document}